
\documentclass[letterpaper, 10 pt, conference]{ieeeconf}  

\IEEEoverridecommandlockouts                              

\overrideIEEEmargins                                      

\pdfobjcompresslevel=0


\usepackage{graphics} 
\usepackage{epsfig} 
\usepackage{mathptmx} 
\usepackage{times} 
\usepackage{setspace}
\usepackage{amssymb}
\usepackage{amsmath}
\usepackage{amsfonts}
\usepackage{bm}

\newcommand{\Pb}{{\mathbb{P}}}

\newcommand{\Eb}{{\mathbb{E}}}
\newcommand{\Zb}{{\mathbb{Z}}}

\newcommand{\Fc}{{\mathcal{F}}}
\newcommand{\Rb}{{\mathbb{R}}}

\newcommand{\Bc}{{\mathcal{B}}}
\newcommand{\ra}{{\rightarrow}}
\newcommand{\ift}{{\infty}}
\newcommand{\oc}{{\overline{c}}}

\newtheorem{lemma}{\textbf{Lemma}}
\newtheorem{theorem}{\textbf{Theorem}}
\newtheorem{remark}{\textbf{Remark}}
\newtheorem{assumption}{\textbf{Assumption}}
\newtheorem{corollary}{\textbf{Corollary}}

\DeclareMathOperator{\supp}{supp}
\DeclareMathAlphabet{\mathpzc}{OT1}{pzc}{m}{it}

\newcommand{\Sp}{{\mathpzc{S}}}

\newcommand{\Rp}{{\mathpzc{R}}}
\newcommand{\Cp}{{\mathpzc{C}}}
\newcommand{\Np}{{\mathpzc{N}}}
\newcommand{\Mp}{{\mathpzc{M}}}
\newcommand{\Op}{{\mathpzc{O}}}

\usepackage{pgfplots}
\pgfplotsset{compat=newest}
\usetikzlibrary{plotmarks}
\usepgfplotslibrary{patchplots}
\usepackage{grffile}
\usepackage{amsmath}


\title{\LARGE \bf
	Game Theoretical Approach to Sequential Hypothesis Test with Byzantine Sensors
}

\author{ Zishuo Li$^{1}$,  Yilin Mo$^{2}$, and Fei Hao$^{1}$
	\thanks{
		$^1$: Z. Li and F. Hao are with the School of Automation Science and Electrical Engineering, Beihang University, China. Email: {lizishuo1523@buaa.edu.cn; fhao@buaa.edu.cn}.
}
	\thanks{
		$^2$: Y. Mo is with the Department of Automation and BNRist, Tsinghua University, China. Email: {ylmo@tsinghua.edu.cn}.
}
}

\begin{document}
	
	\maketitle
	\thispagestyle{empty}
	\pagestyle{empty}

	\begin{abstract}
          In this paper, we consider the problem of sequential binary hypothesis test in adversary environment based on observations from $s$ sensors, with the caveat that a subset of $c$ sensors is compromised by an adversary, whose observations can be manipulated arbitrarily. We choose the asymptotic Average Sample Number (ASN) required to reach a certain level of error probability as the performance metric of the system. 
          The problem is cast as a game between the detector and the adversary, where the detector aims to optimize the system performance while the adversary tries to deteriorate it. 
          We propose a pair of flip attack strategy and voting hypothesis testing rule and prove that they form an equilibrium strategy pair for the game. We further investigate the performance of our proposed detection scheme with unknown number of compromised sensors and corroborate our result with simulation.
	\end{abstract}

	\section{INTRODUCTION}
	Recent advancements in communication technology and sensing elements have made networked sensor system more readily available in control systems, performing the function of observation, detection and monitoring.
	However, the reliance on communication and sparsely spacial distribution make the sensor system vulnerable in the presence of various cyber attacks such as measurement manipulation, communication block, false data injection, etc. Since malicious attacks, such as Stuxnet~\cite{STUXNET} and BlackEnergy malware~\cite{black_energy} may incur substantial damage on economy, ecosystem and even public safety, designing resilient networked system with secure detection, estimation and control algorithm has been recognized by both engineers and scholars as a significant research field.
	
	
	In this paper we consider the problem of detecting a binary state $\theta$ with $s$ sensors in adversarial environment. We assume $c$ out of $s$ sensors are compromised and their observations could be manipulated arbitrarily by the adversary. We introduce the Byzantine attack setting where system manager has no information about the exact set of corrupted sensors but only knows the cardinality of the set. The detection performance is evaluated by its Average Sample Number under prescribed level of significance (probability of error).
    We adopt a similar formulation as \cite{yanjiaqi2017CDC} where the problem is considered as a game between the detector and the attacker, in which the detector attempts to optimize the performance while the adversary intends to deteriorate it. A pair of strategy (attack strategy from the adversary and hypothesis testing scheme from the detector) is proposed and proved to be a Nash equilibrium pair for the game. Furthermore, scenario with unknown number of compromised sensors is investigated and choice of parameter in sequential test algorithm is discussed. 
	
	\textit{Related Work:}
	
	The study of sequential analysis (to the best of our knowledge) originated from 
	Abraham Wald et al.~\cite{wald_sequential_test_paper1945}\cite{wald_optimum1947} who proposed the Sequential Probability Ratio Test (SPRT) and proved its optimality in 1940s. Due to its wide applicability and optimality in hypothesis testing, sequential analysis has gained wide application in sensor network security design~\cite{recent_2_distributed}\cite{FAST_MOBILE_REPLICA}, change detection\cite{fellouris2018}\cite{change_detection+dos_attack}, signal anomaly detection~\cite{GNC}\cite{spoof}, etc.
	
	As threats to control systems from cyber attacks are increasing rapidly these days, studies about secure detection problem draw attention from researchers. The research efforts can be classified into two main directions: anomaly diagnose and resilient algorithm design. In the former one, anomaly diagnosis schemes are designed to reveal the existence of attack and trigger alarm and/or recovery mechanism.
	For example, the problem of revealing the existence of attacks or vulnerable part of the system that requires protection is considered in \cite{stealthy_attack_revealing} and \cite{hu_stateestimation_automatica}. 
	In the research about resilient algorithm design, researchers pursuit a design of secure system which has graceful performance degradation in the presence of attack. Since  
	attacks may not be eliminated immediately even if we know its existence because of the concealment of attackers in cyberspace, resilient algorithm design is preferred in the sense of safety guarantee. We choose resilient testing algorithm design as our research goal in this paper.
	
	The problem of resilient inference has been studied from various perspective recently including hypothesis testing \cite{yanjiaqi2017CDC}\cite{bayesian_detection}, change detection \cite{change_detection+dos_attack}\cite{fellouris2018}, 	state estimation~\cite{tabuada1}, etc. We focus on hypothesis testing problem.  
	Similar formulation of detecting a binary state with multiple sensors under Byzantine attack is studied by Ren et al.~\cite{xiaoqiang2018} recently and the problem of security-efficiency trade-off is raised. Moreover, the model is extended to multi-hypothesis testing and heterogeneous sensor scenario where game theoretic approach is adopted~\cite{xiao_multihypo_game} and sensor selection problem is investigated~\cite{xiao_different_sensor_deploy}.
	
	We consider the problem of detecting a binary state using sequential analysis in the sense that stopping time is determined by observations while some other researches use a prescribed number of observed samples, e.g. one-shot scheme~\cite{joao_adversarial}\cite{joao_integrity} and fixed time analysis \cite{xiaoqiang2018}.
	By making decision adapted to observations, Average Sample Number is saved (as can be seen in Remark \ref{rm:save_sample}) because sampling is stopped as soon as the existing observations possess enough preference on a certain hypothesis. 
	The efficiency of detection sampling in our paper is evaluated and optimized by integrating ASN into performance metric (see definition of delay in equation \ref{delay_def}). Similar methodology could be seen in the study of change detection (e.g. \cite{fellouris2018}\cite{veeravalli2012}).
	
	The rest of this paper is organized as follows: In Section~\ref{sec:problem}, we formulate the problem of binary hypothesis test and define the admissible attack and binary state detecting strategy as well as the performance metric. In Section~\ref{sec:pair}, we propose an attack strategy by flipping the distribution of observations from the compromised sensors and a resilient detection strategy by voting among all sensors. This pair of strategy is then proved to form an equilibrium pair for the game between attacker and detector. In Section~\ref{sec:extension}, the scenario where actual number of compromised sensors is unknown  is investigated and corresponding performance is quantified. Simulation result is provided in Section~\ref{sec:simu}, and Section VI finally concludes the paper. 
	
	
	\textit{Notations:} We denote by $\Zb^+$ the set of positive integers and by $\Rb$ the set of real numbers. We denote by $x \sim y$ when $x/y \ra 1 $. Cardinality of a finite set $\Sp$ is denoted as $|\Sp|$. Transpose of a vector or matrix is denoted by superscript $T$.
	
	\section{PROBLEM FORMULATION}
\label{sec:problem}	
	\subsection{Binary Hypothesis Testing}
	
	Suppose there is a binary state $\theta\in\{0,1\}$ detected by a group of $s$ sensors. At each discrete time index $k$, the observation from each sensor $i\in\Sp\triangleq \{1,2,...,s\}$ is collected by a fusion center. Let row vector $\bm{x}_i=[x_i(1),x_i(2),x_i(3),...]$ denote the sequence of observations from the $i$th sensor and column vector $\bm{x}(k)=[x_1(k),x_2(k),x_3(k),...,x_s(k)]^T$ denote the observations at time $k$ from all sensors. 
	We assume that all observations from different sensors at different time are independently identically distributed for each $\theta$. Simialr to notations in \cite{xiaoqiang2018}, when the hypothesis is false ($\theta=0$), probability measure generated by $x_i(k)$ is denoted as $\nu$ and it is denoted as $\mu$ when the hypothesis is true ($\theta = 1$).
	In other words, for any Borel-measurable set $\Bc\subseteq \mathbb R$, the probability that $x_i(k)\in \Bc$ equals to $\nu(\Bc)$ when $\theta = 0$ and equals to $\mu(\Bc)$ when $\theta = 1$.
	We denote the probability space generated by all measurements $\bm{x}(1),\,\bm{x}(2),\,\dots$ as $(\Omega,\,\mathcal F,\,\mathbb P_\theta)$
	, where for any $l\geq 1$
	\begin{align*}
	&\mathbb P_\theta(x_{i_1}(k_1)\in \Bc_1,\dots,  x_{i_l}(k_l)\in \Bc_l) \\
	&= \begin{cases}
	\nu(\Bc_1)\nu(\Bc_2)\dots\nu(\Bc_l)&\text{if }\theta = 0\\
	\mu(\Bc_1)\mu(\Bc_2)\dots\mu(\Bc_l)&\text{if }\theta = 1
	\end{cases},
	\end{align*}
	when $(i_j,k_j)\neq (i_{j'},k_{j'})$ for all $ j \neq j' $.
	The expectation taken with respect to $\mathbb P_\theta$ is denoted by $\mathbb E_\theta$.
	
	We further assume that probability measure $\nu$ and $\mu$ are absolutely continuous with respect to each other.
	Therefore, the log-likelihood ratio $L_i(k)$ of $x_i(k)$ is well-defined as
	\begin{equation}
	L_i(k) \triangleq \log\left(\frac{\rm d\mu}{\rm d\nu}(x_i(k))\right) ,
	\label{eq:loglikelihoodratio}
	\end{equation}
	where $\rm d\mu/ \rm d\nu$ is the Radon-Nikodym derivative.
	\subsection{Byzantine Attack}
	Let the (manipulated) observation received by the fusion center at time $k$ be
	\begin{equation}\label{rewrite_attack}
	\bm{x}'(k)=\bm{x}(k)+\bm{x}^a(k),
	\end{equation}
	where $\bm{x}^a(k) \in \mathbb{R}^s$ is the deflective vector injected by the attacker at time $k$. By adding values to the real observations $\bm{x}(k)$, the attacker can rewrite them to arbitrary value they assign. We have the following assumptions on the attacker.
	
	\begin{assumption}[Sparse Attack] \label{assumpt:sparse}
		There exists an index set $\Cp\subseteq \Sp $ with $|\Cp| = c$ such that $\bigcup_{k=1}^{\infty} \supp\left\{\bm{x}^a(k)\right\} =\Cp$ where
		$\supp(\bm{x})\triangleq \left\{i\in\Sp:{x}_i\neq0 \right\}$ is the support of vector $\bm{x}$.
		Furthermore, the system knows the cardinality $c$, but it does not know the set $\Cp$.
	\end{assumption}
\begin{remark}
	It is conventional in the literature (e.g. \cite{fellouris2018} \cite{soltan2013_misbehavenodes}\cite{HAN2015_Bayesian_Detection_Byzantine_Data}) to assume that the attacker possesses limited resources, i.e., the number (or percentage) of compromised sensors is fixed and is known by the system manager. The value of $c$ can also be seen as a design parameter representing the tolerance of sensor corruptions in the system.
\end{remark}

We denote by $\Np \triangleq \Sp\setminus \Cp$ the honest (not affected by attack) sensor. The information the attacker have access to is assumed as follows:
	\begin{assumption}[Attacker Knowledge] \label{assumpt:att_know}
		(1) The attacker knows the probability measure, i.e. $\mu$ and $\nu$;
		(2) The attacker knows the real system state $\theta$; 
		(3) The attacker knows the real observation from all compromised sensors from the beginning to the present time instant.
	\end{assumption}

	\begin{remark}
	The only restriction on the attack strategy is that the set of compromised sensors is fixed (from Assumption \ref{assumpt:sparse}). The attacker have adequate knowledge about the system and can carry out complex attack strategies such as time-varying or probabilistic ones. This assumption is conventional in literature concerning the worst-case attacks (e.g. \cite{marano_byzantine_attack}). Nevertheless, assuming the adversary to be powerful when designing system would make sure its security and is in accordance with the Kerckhoffs's principle.
	\end{remark}

	An admissible attack strategy is a mapping from attacker's information set to the bias vector that satisfies Assumption \ref{assumpt:sparse}. Let the compromised sensor index set $\Cp=\{i_1,i_2,\cdots,i_c\}$. 
	Define $\bm{X}_\Cp(k)$ as the matrix formed by true measurements from time 1 to k at compromised sensors:
	\begin{equation*}
	\bm{X}_\Cp(k)\triangleq[\bm{x}_\Cp(1),\bm{x}_\Cp(2),\cdots,\bm{x}_\Cp(k)]\in \Rb^{c\times k}
	\end{equation*}
	with
	\begin{equation*}
	\bm{x}_\Cp(k)\triangleq[x_{i_1}(k),x_{i_2}(k),\cdots,x_{i_c}(k)]^T\in \Rb^{c\times 1}.
	\end{equation*}
	
	Similar to $\bm{X}_\Cp(k)$, $\bm{X}^a(k)\in \Rb^{s\times k}$ is defined as the matrix formed by bias vectors $\bm{x}^a(k) \in \Rb^{s \times 1}$ from time 1 to $k$. The injected bias vector is designed by the attacker based on its information set, i.e.
	\begin{equation}\label{eq:attack_strategy}
	\bm{x}^a(k)=g\left(\bm{X}_\Cp(k),\bm{X}^a(k-1),\theta,k\right),
	\end{equation}
	where $g$ is a measurable function of accessible observations $\bm{X}_\Cp(k)$, history attacks $\bm{X}^a(k-1)$, real state $\theta$ and time $k$ such that $\bm{x}^a(k)$ satisfies Assumption 1.
	Denote the probability space generated by all manipulated observations $\bm{x}'(1),\bm{x}'(2),\dots$ as $(\Omega, \mathcal{F}, \Pb^g_\theta)$ where $\theta$ is the real state. The corresponding expectation is denoted as $\Eb^g_\theta$.
	\subsection{Performance Metric}
	The detector at time $k$ is defined as a mapping from the manipulated observation matrix to the set of decision:
	$$f_k:\bm{X}'(k)\ra \{continue,0,1\},$$
	where $continue$ denote taking next observation at time $k+1$ because existing knowledge is not enough to make a decision. Decision 0 and 1 denote stop taking observations and choose hypothesis $H_0$ and $H_1$ respectively.
	System's strategy $f \triangleq (f_1,\,f_2,\,\cdots)$ is defined as an infinite sequence of detectors from time $1$ to $\infty$.
	
	Based on the definition of detection strategies, the stopping time $T$ representing the time that the test terminates is a $\{\Fc'_t\}$-stopping time, where $\Fc'_t$ is a $\sigma$-field of all the (manipulated) observations from time 1 to $k$: $\Fc'_t=\sigma\{\bm{X}'(k)\}.$
	Define the worst case Average Sample Number (detection delay) under attack $g$ as 
	\begin{equation}\label{delay_def}
		D(T)\triangleq\max_{\theta=0,1}\Eb^g_\theta[T].
	\end{equation}

	Denote the probability of Type-I and Type-II error\footnote{In statistical hypothesis testing, a type-I error is rejection of a true null hypothesis $H_0$, while a type-II error is the failure to reject a false null hypothesis.}
	of the binary hypothesis testing task as $\alpha$ and $\beta$ respectively, e.g. $\alpha\triangleq \Pb_0[f_T=1], \beta\triangleq \Pb_1[f_T=0]$. As a detector needs to make decisions based on as few samples as possible under error probability constraints which vary in different situations, we consider the asymptotic performance as error probability tends to zero:
	\begin{equation}\label{eq:per_def}
		\gamma(f,g)\triangleq \lim_{\alpha=\beta\ra0^+}	\frac{\log(1/\alpha)}{D(T)}.
	\end{equation}
\begin{remark}
	By definition, $\gamma(f,g)\geq 0$ for any admissible $f$ and $g$ because $\alpha\leq 1$ and $D(T)> 0$.
	The performance integrates error probabilities $\alpha,\beta$ with detection delay $D(T)$ which we hope to be small at the same time. It means larger $\gamma$ indicates better detection performance.
\end{remark}
	\begin{remark}
	The performance $\gamma$ is determined by both the detection rule $f$ and attack strategy $g$ so it is denoted as $\gamma(f,g)$. The system manager intends to design a resilient detector $f$ to maximize $\gamma$ while the attacker needs malicious attack $g$ to minimize $\gamma$. 
\end{remark}

	In this paper, we intend to propose a pair of strategy $(f^*,g^*)$, such that for any strategies $f$ and $g$, the following inequality holds:
	\begin{equation} \label{eq:equ_pair}
	\gamma(f,g^*) \leq \gamma(f^*,g^*) \leq \gamma(f^*,g).
	\end{equation}
	 
	As a result, the pair of strategy $(f^*,g^*)$ reaches a Nash equilibrium (which is not necessarily unique). In other words, given strategy of one player as $f^*$ or $g^*$, the other player do not have a strictly better strategy. We present the strategy pair in the next section.

	\section{Equilibrium Strategy Pair}	\label{sec:pair}
	In this section we present an attack strategy and a detection scheme and prove that they can form a Nash equilibrium pair.
	\subsection{Preliminaries Results}
	Before we go on, we first present some basic results of hypothesis testing scheme without attack which will be helpful for future discussion.
	Denote the Kullback-Leibler (K--L) divergences between those two distribution we are trying to distinguish (i.e. $\mu$ and $\nu$) as
	\begin{align*}
	I_1\triangleq \int_{x\in\Rb}\log\left[\frac{\rm{d} \mu(x)}{\rm{d}\nu(x)}\right] {\rm d}\mu(x), 
	I_0\triangleq-\int_{x\in\Rb}\log\left[\frac{\rm{d}\mu(x)}{\rm{d}\nu(x)}\right] {\rm d}\nu(x)
	\end{align*}
	
	To avoid degenerate problems, we adopt the following assumptions:
	\begin{assumption}
		The K--L divergences are well-defined, i.e., $0<I_0,I_1<\infty$.
		\label{asm:finiteKLs}
	\end{assumption}

	We introduce a sequential test strategy for multiple sensor based on Sequential Probability Ratio Test proposed by Wald \cite{wald1947}.
	We denote the cumulative log-likelihood ratio of sensor $i$ at time $n$ by $S_i(n)$ and the one summing over set $\Mp$ by $S_\Mp(n)$:
	\begin{equation}\label{eq:def_S_M}
	S_i(n)\triangleq \sum_{k=1}^{n}  L_i(k)	,\quad S_\Mp(n)\triangleq \sum_{i\in\Mp} S_i(n),
	\end{equation}
	where $\Mp\subseteq\Sp$. The decision is taken according to whether the prescribed threshold is crossed, i.e.
	
	\begin{equation}\label{eq:simple_SPRT_1}
	f_k=
	\left\{
	\begin{array}{cc}
	
	0 ,& S_\Mp(k)\leq -a\\
	continue ,& -a<S_\Mp(k)<b\\
	1 ,& S_\Mp(k)\geq b
	\end{array},
	\right.
	\end{equation}
	where $a,b>0$ are chosen to regulate error probabilities $\alpha,\beta$. Denote the defined detection rule based on summed log-likelihood ratio from sensors in $\Mp$ as $f_\Mp$. We have the following lemma quantifying performance of this test (called sum-SPRT) in the absence of attack. The proof is provided in Appendix (Section~\ref{ap:lemma_M}).
	
	\begin{lemma}\label{lm:M_sum-SPRT_per+opt}
		Define $I\triangleq \min\{I_0,I_1\}$, for all admissible test rule $f$ based on sensor information in $\Mp$ ,
		\begin{equation}\label{eq:M_sum}
			\gamma(f,g=\bm{0})\leq\gamma(f_\Mp,g=\bm{0})= |\Mp|\cdot I,
		\end{equation}
		where $g=\bm{0}$ means the attacker is absent.	
	\end{lemma}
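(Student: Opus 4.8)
The plan is to prove the two halves of the lemma separately: the equality $\gamma(f_\Mp,g=\bm 0)=|\Mp|\cdot I$ is an asymptotic computation for the sum-SPRT, while the inequality $\gamma(f,g=\bm 0)\le|\Mp|\cdot I$ is an information-theoretic converse that holds for \emph{every} admissible sequential test restricted to the sensors in $\Mp$. The common starting point is that in the absence of attack the aggregated statistic $S_\Mp(n)=\sum_{k=1}^{n}Y(k)$ is a one-dimensional random walk whose increments $Y(k)\triangleq\sum_{i\in\Mp}L_i(k)$ are i.i.d.\ (since the $x_i(k)$ are i.i.d.\ across sensors and time), with $\Eb_1[Y(k)]=|\Mp|\,I_1>0$ and $\Eb_0[Y(k)]=-|\Mp|\,I_0<0$ for every $k$; Assumption~\ref{asm:finiteKLs} makes these drifts finite, and in fact forces $\Eb_\theta[|Y(k)|]<\infty$ (a finite value $I_1=\int\log(\mathrm d\mu/\mathrm d\nu)\,\mathrm d\mu$ rules out an infinite negative part, and similarly for $I_0$). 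Because the $x_i(k)$ are mutually independent, $S_\Mp(n)$ is exactly the log-likelihood ratio of the data $\{x_i(k):i\in\Mp,\,k\le n\}$ between $\Pb_1$ and $\Pb_0$, so $f_\Mp$ is precisely Wald's SPRT for this random walk and $S_\Mp(n)$ is a sufficient statistic.

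For the equality I would invoke the classical SPRT analysis. Choosing the thresholds $a,b$ so that the error probabilities are equal, $\alpha=\beta$ (with a boundary randomization if necessary, which is possible for all sufficiently small error levels), Wald's approximations give $a=b=\log(1/\alpha)\,(1+o(1))$ as $\alpha\ra0^+$. Wald's identity for the stopped walk, $\Eb_\theta[S_\Mp(T)]=\Eb_\theta[Y(1)]\,\Eb_\theta[T]$ (valid since $\Eb_\theta[|Y(1)|]<\infty$ and the nonzero drift forces $\Eb_\theta[T]<\infty$), together with the negligibility of the overshoot of $S_\Mp$ over the thresholds relative to $\log(1/\alpha)$, yields
\begin{equation*}
\Eb_1[T]=\frac{\log(1/\alpha)\,(1+o(1))}{|\Mp|\,I_1},\qquad \Eb_0[T]=\frac{\log(1/\alpha)\,(1+o(1))}{|\Mp|\,I_0}.
\end{equation*}
Hence $D(T)=\max_{\theta}\Eb_\theta[T]=\dfrac{\log(1/\alpha)\,(1+o(1))}{|\Mp|\min\{I_0,I_1\}}$, and passing to the limit in \eqref{eq:per_def} gives $\gamma(f_\Mp,g=\bm 0)=|\Mp|\cdot I$.

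For the converse, let $f$ be an arbitrary admissible test that uses only the observations of the sensors in $\Mp$, with stopping time $T$, terminal decision $f_T\in\{0,1\}$, and errors $\alpha,\beta$. If $\Eb_0[T]$ or $\Eb_1[T]$ is infinite then $D(T)=\infty$ and $\gamma(f,g=\bm 0)=0\le|\Mp|\cdot I$, so assume both are finite. On the stopped $\sigma$-field $\Fc'_T$ the density of $\Pb_1$ with respect to $\Pb_0$ equals $\exp\big(S_\Mp(T)\big)$, so by Wald's identity the relative entropy between these two measures restricted to $\Fc'_T$ is $\Eb_1[S_\Mp(T)]=|\Mp|\,I_1\,\Eb_1[T]$; since $f_T$ is $\Fc'_T$-measurable with Bernoulli laws of parameters $1-\beta$ under $\Pb_1$ and $\alpha$ under $\Pb_0$, the data-processing inequality gives
\begin{equation*}
|\Mp|\,I_1\,\Eb_1[T]\ \ge\ (1-\beta)\log\frac{1-\beta}{\alpha}+\beta\log\frac{\beta}{1-\alpha}\ =\ \log(1/\alpha)\,(1+o(1))
\end{equation*}
as $\alpha=\beta\ra0^+$, and symmetrically $|\Mp|\,I_0\,\Eb_0[T]\ge\log(1/\alpha)\,(1+o(1))$. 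Taking the maximum over $\theta$ and then the limit in \eqref{eq:per_def} yields $\gamma(f,g=\bm 0)\le|\Mp|\cdot I$, which combined with the equality finishes the proof.

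The step I expect to be most delicate is the control of the asymptotics rather than any isolated identity: on the converse side one must check that the $o(1)$ is uniform over all admissible $f$ (here it is, because the lower bound on $\Eb_\theta[T]$ depends on $f$ only through the common error level $\alpha=\beta$), and on the achievability side one must verify that the overshoot of $S_\Mp$ past the thresholds contributes only $o(\log(1/\alpha))$ to the ASN — this is where Assumption~\ref{asm:finiteKLs} (through $\Eb_\theta[|Y(1)|]<\infty$) and standard random-walk/renewal estimates enter.
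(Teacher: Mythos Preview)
Your argument is correct, and for the equality $\gamma(f_\Mp,g=\bm 0)=|\Mp|\,I$ it coincides with the paper's: both reduce the multi-sensor sum-SPRT to a one-dimensional random walk with drift $\pm|\Mp|I_\theta$ and invoke the standard asymptotics (the paper cites Berk, you rederive them via Wald's identity plus $o(1)$ overshoot).

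For the inequality, however, you take a genuinely different route. The paper argues by contradiction against the Wald--Wolfowitz optimality theorem: if some test $f'$ had $\gamma(f',g=\bm 0)>|\Mp|\,I$, then at a common error level one would obtain $D(T')<D(T_\Mp)$, violating the fact that SPRT simultaneously minimizes $\Eb_0[T]$ and $\Eb_1[T]$ among all tests with the same error probabilities. Your proof instead establishes the converse directly via the information-theoretic (Wald) lower bound $|\Mp|\,I_\theta\,\Eb_\theta[T]\ge d(\beta\|1-\alpha)$ obtained from the data-processing inequality, bypassing Wald--Wolfowitz entirely. Your approach is more self-contained and makes the dependence on the error level explicit (so the uniformity issue you flag is transparently resolved), while the paper's approach is shorter because it offloads the work to a classical black-box theorem; either way the conclusion is the same.
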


	\begin{remark}
	The performance of $f_\Mp$ is proportional to the number of sensors $|\Mp|$ and the constant $I$ defined by K-L divergence. Constant $I$ who represents the "distance" of two distributions could be treated as a basic unit of performance.
	\end{remark}
%
	
	Now we move on to consider the detection problem under attack. We assume $s>2c$ to prevent trivial problems in the rest of paper if without further notice.
	\subsection{Attack Strategy}
	In this subsection we show an attack strategy where the attacker flips the distribution of the compromised sensor observations under different states to confuse the detector. We denote it as $g^*$ (named flip attack) and it is defined in the following:
	
	Denote sensor index set of the first $c$ sensors as $\Op_1\triangleq\{1,2,\dots,c\}$ and the set of last $c$ sensors as $\Op_2 \triangleq \{s - c + 1, s - c + 2,\dots, s\}$.
    If $\theta=0$, generates random observations $\tilde{x}_i(k)$ at time $k$ for every sensor $i\in \Op_1$ according to the opposite distribution $\mu$, i.e. for each Borel set $\Bc$
    \begin{equation}\label{eq:def_att_theta0_1}
    \Pb[\tilde{x}_i(k)\in\Bc]=\mu(\Bc), \ \theta=0,i\in \Op_1.
    \end{equation}
    Then design bias data to make sure the final observations $x_i(k)+x^a_i(k)$ of sensors in $\Op_1$ is the same as $\tilde{x}_i(k)$:
    
    \begin{equation}\label{eq:def_att_theta0_2}
    	x^a_i(k)=\tilde{x}_i(k)-{x}_i(k), \  \theta=0,i\in \Op_1.
    \end{equation}
    
    If $\theta=1$, observations in $\Op_2$ is manipulated in similar way.
    \begin{equation}\label{eq:def_att_theta1_1}
    \Pb[\tilde{x}_i(k)\in\Bc]=\nu(\Bc), \ \theta=1,i\in \Op_2.
    \end{equation}
    \begin{equation}\label{eq:def_att_theta1_2}
    x^a_i(k)=\tilde{x}_i(k)-{x}_i(k), \  \theta=1,i\in \Op_2.
    \end{equation}
    
    For sensors not mentioned above, the bias value $x^a_i(k)=0$. By this operation, the following inequality of performance holds.
    
%
%
	 
	\begin{theorem}\label{th:attack}
		For any admissible detection strategy $f$ we have
		\begin{equation}\label{eq:attack_upper}
		\gamma(f,g^*)\leq (s-2c) I.
		\end{equation}
	\end{theorem}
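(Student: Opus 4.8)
The plan is to show that against the flip attack $g^*$ the $2c$ ``flanking'' sensors $\Op_1\cup\Op_2$ become statistically useless, so that the test effectively sees only the $s-2c$ ``middle'' sensors $\Mp_0\triangleq\Sp\setminus(\Op_1\cup\Op_2)=\{c+1,\dots,s-c\}$, for which the analysis behind Lemma~\ref{lm:M_sum-SPRT_per+opt} applies and yields the bound $(s-2c)I$.

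First I would pin down the law of the received process $\bm{x}'(1),\bm{x}'(2),\dots$ under $\Pb^{g^*}_\theta$. By the definition of $g^*$, when $\theta=0$ every stream in $\Op_1$ is i.i.d.\ $\mu$ (it has been flipped) while every other stream is the genuine i.i.d.\ $\nu$; when $\theta=1$ every stream in $\Op_2$ is i.i.d.\ $\nu$ (flipped) while every other stream is the genuine i.i.d.\ $\mu$. Hence each stream in $\Op_1$ is i.i.d.\ $\mu$ under \emph{both} hypotheses, each stream in $\Op_2$ is i.i.d.\ $\nu$ under \emph{both} hypotheses, only the middle streams change law (i.i.d.\ $\nu$ under $\theta=0$ versus i.i.d.\ $\mu$ under $\theta=1$), and all $s$ streams are mutually independent since genuine observations are independent across sensors and time and the freshly generated $\tilde{x}_i(k)$ are drawn independently. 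Consequently the likelihood-ratio process $Z_n\triangleq(\mathrm{d}\Pb^{g^*}_1/\mathrm{d}\Pb^{g^*}_0)\big|_{\Fc'_n}$ satisfies $Z_n=\exp(S_{\Mp_0}(n))$, where $S_{\Mp_0}(n)=\sum_{i\in\Mp_0}\sum_{k=1}^n L_i(k)$ is formed from the (for $\Mp_0$ unaltered) received data; the flanking block contributes a factor identically $1$.

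With this identity in hand the remainder is the Wald-type ASN lower bound used in the proof of Lemma~\ref{lm:M_sum-SPRT_per+opt}. Fix an admissible $f$ with stopping time $T$; we may assume $\Eb^{g^*}_\theta[T]<\infty$ for $\theta=0,1$, otherwise $D(T)=\infty$ and $\gamma(f,g^*)=0$, making the claim trivial. Under $\Pb^{g^*}_0$ the increments $\xi_k\triangleq\sum_{i\in\Mp_0}L_i(k)$ are i.i.d.\ with mean $-|\Mp_0|I_0$ and adapted to $\{\Fc'_n\}$ with $\xi_k$ independent of $\Fc'_{k-1}$, so Wald's identity gives $-\Eb^{g^*}_0[\log Z_T]=|\Mp_0|I_0\,\Eb^{g^*}_0[T]$; the data-processing inequality applied to the terminal decision $f_T\in\{0,1\}$ gives $-\Eb^{g^*}_0[\log Z_T]\geq d(1-\alpha\,\|\,\beta)$, where $d(p\,\|\,q)\triangleq p\log(p/q)+(1-p)\log\frac{1-p}{1-q}$; hence $\Eb^{g^*}_0[T]\geq d(1-\alpha\,\|\,\beta)/(|\Mp_0|I_0)$, and by the symmetric argument under $\Pb^{g^*}_1$ (drift $+|\Mp_0|I_1$) one gets $\Eb^{g^*}_1[T]\geq d(1-\beta\,\|\,\alpha)/(|\Mp_0|I_1)$. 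Letting $\alpha=\beta\to0^+$, both binary divergences are $\sim\log(1/\alpha)$, so $D(T)=\max_{\theta}\Eb^{g^*}_\theta[T]\geq (1+o(1))\log(1/\alpha)/(|\Mp_0|I)$ with $|\Mp_0|=s-2c$; dividing by $\log(1/\alpha)$ and taking the limit yields $\gamma(f,g^*)\leq(s-2c)I$.

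The step I expect to be most delicate is combining the ``flanking sensors are useless'' reduction with the legitimacy of the Wald computation for a rule that may exploit the $2c$ noise streams: $T$ is a stopping time for the full filtration $\{\Fc'_n\}$, not merely for the one generated by $S_{\Mp_0}$, so one must verify that $(\xi_k)$ remains an i.i.d.\ increment sequence adapted to $\{\Fc'_n\}$ with $\xi_k$ independent of $\Fc'_{k-1}$ (which holds because the middle streams are independent of the flanking streams and of the past), that the integrability needed for Wald's identity and optional stopping is in force (guaranteed by $I_0,I_1<\infty$ from Assumption~\ref{asm:finiteKLs}), and that using the ancillary flanking streams amounts only to external randomization, which cannot improve the asymptotic exponent. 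A minor additional check is that $g^*$ is admissible in the sense of Assumption~\ref{assumpt:sparse}: since the attacker knows and is fixed to a single $\theta$, its injected bias is supported on the fixed set $\Op_1$ (if $\theta=0$) or $\Op_2$ (if $\theta=1$), each of cardinality $c$.
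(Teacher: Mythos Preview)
Your argument is correct and rests on the same core idea as the paper: under $g^*$, the streams in $\Op_1\cup\Op_2$ have the same marginal law under both hypotheses, so only the $s-2c$ middle sensors carry information, and the Wald/SPRT lower bound for that reduced problem gives $(s-2c)I$.

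The difference is in execution. The paper's proof is essentially one line: it observes that only sensors in $\Sp\setminus(\Op_1\cup\Op_2)$ have distributions depending on $\theta$, invokes Lemma~\ref{lm:M_sum-SPRT_per+opt} with $\Mp=\Sp\setminus(\Op_1\cup\Op_2)$, and concludes. You instead unpack that lemma into its constituents---computing the likelihood ratio $Z_n=\exp(S_{\Mp_0}(n))$ on the full filtration $\{\Fc'_n\}$, applying Wald's identity to the increments $\xi_k$, and then the data-processing inequality for the terminal decision---and in doing so you close a small gap the paper leaves open: Lemma~\ref{lm:M_sum-SPRT_per+opt} is stated for tests ``based on sensor information in $\Mp$,'' whereas here $f$ may also depend on the flanking streams. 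Your explicit check that $\xi_k$ is independent of $\Fc'_{k-1}$ (so Wald's identity holds for stopping times on the \emph{full} filtration) and your remark that the ancillary flanking streams act only as external randomization are precisely the justification the paper's one-line proof suppresses. So the paper's route is shorter but slightly informal at that step; yours is longer but self-contained.
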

	
	\begin{remark}
		The coefficient $(s-2c)$ indicates that the detector will have positive performance when less than half of the sensors are compromised.
		It also implies every increase of compromised sensor will incur two units of performance decrease. The result follows Theorem 3(2) in \cite{xiaoqiang2018}.
	\end{remark}
	\begin{proof}
	Under attack $g^*$, for either $\theta=0$ or $\theta=1$, sensors in $\Op_1$ will follow distribution $\mu$ and sensors in $\Op_2$ will follow distribution $\nu$. In other words, only sensors in $\Sp\setminus(\Op_1\cup\Op_2)$ have different distributions under different $\theta$.
	Since we assume $s>2c$, $\Sp\setminus(\Op_1\cup\Op_2)\neq\emptyset$. If we define $\Mp=\Sp\setminus(\Op_1\cup\Op_2)$, according to Lemma \ref{lm:M_sum-SPRT_per+opt}, 
	$$\gamma(f,g^*)\leq \gamma (f_{\Mp},g=\bm{0})= |\Sp\setminus(\Op_1\cup\Op_2)|I=(s-2c)I.$$
	Thus, equation (\ref{eq:attack_upper}) is obtained.	
	\end{proof}

	\subsection{Detection Strategy}
	In this section we present a detection strategy that could form a Nash equilibrium pair with flip attack $g^*$.
	Before we present the detection rule, we first define some notations.
	
	First we define the stopping time of single threshold test for each sensor $i$ in (\ref{eq:def_taub})(\ref{eq:def_taua}). Similar to basic SPRT, those two thresholds are denoted as $-a<0<b$:
	\begin{align}
	\tau^+_{i}(b)\triangleq &\inf_{k\in\Zb^+}\{S_{i}(k)\geq b\} \label{eq:def_taub}. \\
	\tau^-_{i}(a)\triangleq &\inf_{k\in\Zb^+}\{S_{i}(k)\leq -a\}  \label{eq:def_taua}.
	\end{align}
	Then sort those stopping time of the same threshold in an ascending order and denote them as $\tau^-_{(i)}(a),\tau^+_{(i)}(b)$ : 
	$$\tau^-_{(1)} (a)\leq \tau^-_{(2)} (a)\leq\cdots \leq\tau^-_{(s)} (a),$$
	$$\tau^+_{(1)} (b)\leq \tau^+_{(2)} (b)\leq\cdots \leq\tau^+_{(s)} (b).$$	
	Define $r$ as the parameter of decision rules with $s/2<r\leq s$ and the voting rule $f^{(r)}$ is defined as taking corresponding hypothesis the first time when there have been $r$ crossing of the same threshold. The rule is showed formally in the following. For each time $k$,
	\begin{equation}\label{eq:def_vote_rule}
	f^{(r)}_k=
	\left\{
	\begin{array}{cc}
	continue ,& k<min\{\tau^-_{(r)}(a),\tau^+_{(r)}(b)\} \\
	0 ,& k=\tau^-_{(r)}(a)<\tau^+_{(r)}(b)\\
	1 ,& k=\tau^+_{(r)}(b)<\tau^-_{(r)}(a)\\
	0\ or\ 1, & k=\tau^+_{(r)}(b)=\tau^-_{(r)}(a)\\
	\end{array}.
	\right.
	\end{equation}
	The decision $0\ or\ 1$ means stop sampling and take $H_0$ or $H_1$ with the same probability 0.5. Denote the detection strategy defined above as $f^{(r)}\triangleq\{f^{(r)}_1,f^{(r)}_2,\dots\}$.	We denote the stopping time of detection rule $f^{(r)}$ as $T^{(r)}$.
	
	Before we show the performance of detection strategy, we provide some preliminary results of stopping times and error probabilities in absence of attack whose proof is provided in Appendix (Section~\ref{ap:theo_prelimi})
	\begin{theorem}\label{th:basic_char}
			\begin{align}
			(1)&\lim_{a=b\ra\ift}\Eb_0\left|\frac{\tau^-_{(r)}(a)}{a}-\frac{1}{I_0}\right|=0,  \label{eq:tau_r}
			\lim_{a=b\ra\ift}\Eb_1\left|\frac{\tau^+_{(r)}(b)}{b}-\frac{1}{I_1}\right|=0  \\	
			(2)&\lim_{a=b\ra\ift}\frac{\Eb_0[{T}^{(r)}]}{a}\leq\frac{1}{I_0},\ \lim_{a=b\ra\ift}\frac{\Eb_1[{T}^{(r)}]}{b}\leq\frac{1}{I_1} \label{eq:T_r}\\
			(3)&\lim_{a=b\ra\ift} \frac{1}{b}\log \Pb_0[\tau^+_{(r)}(b)\leq\tau^-_{(r)}(a)]\leq -r \label{eq:alpha} \\
			&\lim_{a=b\ra\ift} \frac{1}{a}\log \Pb_1[\tau^-_{(r)}(a)\leq\tau^+_{(r)}(b)]\leq -r \label{eq:beta} 
			\end{align}
	\end{theorem}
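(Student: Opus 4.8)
The plan is to handle the four displayed claims in the order stated, deriving \eqref{eq:T_r} from \eqref{eq:tau_r} and proving \eqref{eq:tau_r}, \eqref{eq:alpha}, \eqref{eq:beta} from two classical facts about the single–sensor walk $S_i(\cdot)=\sum_k L_i(k)$. First, under $\Pb_0$ this walk has finite negative drift $\Eb_0[L_i(1)]=-I_0$ (by Assumption~\ref{asm:finiteKLs}, and $\Eb_0|L_i(1)|<\ift$ follows since $\Eb_0[e^{L_i(1)}]=\int(\rm d\mu/\rm d\nu)\,\rm d\nu=1$ controls the positive part), while under $\Pb_1$ it has finite positive drift $\Eb_1[L_i(1)]=I_1$. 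Second, $\{e^{S_i(k)}\}_k$ is a unit-mean $\Pb_0$-martingale and $\{e^{-S_i(k)}\}_k$ is a unit-mean $\Pb_1$-martingale. From the first fact and the standard SPRT first-passage asymptotics (going back to Wald~\cite{wald1947}), for a single sensor $\tau^-_i(a)/a\ra 1/I_0$ $\Pb_0$-a.s., and moreover the family $\{\tau^-_i(a)/a\}_{a>0}$ is uniformly integrable (this uses an exponential tail bound on $\tau^-_i(a)$, e.g.\ via a submartingale argument on $e^{-\epsilon S_i(k)}$, or an overshoot estimate).

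For \eqref{eq:tau_r}: since the $\tau^-_1(a),\dots,\tau^-_s(a)$ are i.i.d.\ and each $\tau^-_i(a)/a\ra 1/I_0$ a.s., the minimum $\tau^-_{(1)}(a)/a=\min_i\tau^-_i(a)/a$ and the maximum $\tau^-_{(s)}(a)/a=\max_i\tau^-_i(a)/a$ both converge a.s.\ to the same constant $1/I_0$ (a finite max/min of a.s.-convergent sequences with equal limits), so the sandwich $\tau^-_{(1)}(a)\le\tau^-_{(r)}(a)\le\tau^-_{(s)}(a)$ gives $\tau^-_{(r)}(a)/a\ra 1/I_0$ a.s. To upgrade to $L^1$, note $0\le \tau^-_{(r)}(a)/a\le \sum_{i=1}^s\tau^-_i(a)/a$; a finite sum of uniformly integrable families is uniformly integrable, and domination by a uniformly integrable family is uniformly integrable, so $\{\tau^-_{(r)}(a)/a\}$ is uniformly integrable, and a.s.\ convergence plus uniform integrability yields $\Eb_0|\tau^-_{(r)}(a)/a-1/I_0|\ra 0$. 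The claim for $\tau^+_{(r)}(b)$ under $\Pb_1$ is the mirror image, with $\mu,\nu$ interchanged.

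Claim \eqref{eq:T_r} is then immediate: $T^{(r)}=\min\{\tau^-_{(r)}(a),\tau^+_{(r)}(b)\}\le\tau^-_{(r)}(a)$ pointwise, so $\Eb_0[T^{(r)}]/a\le \Eb_0[\tau^-_{(r)}(a)]/a\ra 1/I_0$ by \eqref{eq:tau_r} ($L^1$ convergence carries convergence of means); the bound on $\Eb_1[T^{(r)}]/b$ is symmetric. For \eqref{eq:alpha}: under $\Pb_0$ every $S_i$ drifts to $-\ift$, so $\tau^-_{(r)}(a)<\ift$ a.s., hence, up to a null set, $\{\tau^+_{(r)}(b)\le\tau^-_{(r)}(a)\}\subseteq\{\tau^+_{(r)}(b)<\ift\}$. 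The latter event says at least $r$ of the $s$ sensors ever cross level $b$, so a union bound over the $\binom{s}{r}$ choices of crossing subset, with independence across sensors, gives $\Pb_0[\tau^+_{(r)}(b)<\ift]\le\binom{s}{r}\bigl(\Pb_0[\tau^+_1(b)<\ift]\bigr)^r$. Optional stopping of the unit-mean martingale $e^{S_1(k)}$ at $\tau^+_1(b)\wedge n$, using $e^{S_1(\tau^+_1(b))}\ge e^{b}$ on $\{\tau^+_1(b)\le n\}$ and letting $n\ra\ift$, yields $\Pb_0[\tau^+_1(b)<\ift]\le e^{-b}$; therefore $\Pb_0[\tau^+_{(r)}(b)\le\tau^-_{(r)}(a)]\le\binom{s}{r}e^{-rb}$, and taking $\tfrac1b\log(\cdot)$ and $b\ra\ift$ sends $\tfrac1b\log\binom{s}{r}\ra 0$, leaving $-r$. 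Inequality \eqref{eq:beta} is obtained identically using the $\Pb_1$-martingale $e^{-S_1(k)}$ and the estimate $\Pb_1[\tau^-_1(a)<\ift]\le e^{-a}$.

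The main obstacle is the $L^1$ (rather than merely a.s.) statement in \eqref{eq:tau_r}: everything else is the strong law, optional stopping, union bounds, and order-statistic sandwiching, but the $L^1$ claim forces one to establish uniform integrability of the normalized single-sensor SPRT stopping time, which requires a genuine exponential tail estimate on $\tau^-_i(a)$ (e.g.\ a change-of-measure or exponential-submartingale argument) rather than a soft comparison. Once that ingredient is in hand, the rest of the argument is routine.
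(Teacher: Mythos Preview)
Your skeleton matches the paper's proof: Part~(1) via sandwiching $\tau^\pm_{(r)}$ between the extreme order statistics, Part~(2) as an immediate corollary of Part~(1) and $T^{(r)}\le\tau^\pm_{(r)}$, and Part~(3) via a single-sensor exponential bound combined with a union bound over the $\binom{s}{r}$ index subsets and independence across sensors. The differences lie in how the two single-sensor inputs are obtained. For Part~(1) the paper simply invokes Farrell~\cite{farrell1964-1} (recorded as Lemma~\ref{lm:cross_time_secu}), which delivers the $L^1$ convergence of the max- and min-type passage times directly under a finite-first-moment hypothesis; your route through a.s.\ convergence plus uniform integrability is equivalent, and the uniform integrability you flag as the main obstacle in fact follows already from $\Eb_0|L_i(1)|<\ift$ by standard renewal-theoretic arguments, so the exponential tail estimate you propose is sufficient but stronger than necessary. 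For Part~(3) the paper cites an external one-sided passage result (Lemma~\ref{lm:false_alarm_secu}, from~\cite{upper_bound_kugler2013}) and then verifies its hypotheses with $h=-1$ to obtain $\Pb_1[\tau^-_i(a)<\ift]\sim Ce^{-a}$; your direct optional-stopping argument on the likelihood-ratio martingale gives the cleaner inequality $\Pb_0[\tau^+_i(b)<\ift]\le e^{-b}$ (respectively $\Pb_1[\tau^-_i(a)<\ift]\le e^{-a}$) with no external reference, which is more elementary and is exactly what the union-bound step needs, since only the upper bound is used.
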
 
	
	

		Based on Theorem~\ref{th:basic_char}  we are ready to show the performance of our detection rule with carefully designed $r$.
	\begin{theorem}\label{th:detect}
		For any admissible attack strategy $g$, fix $r=s-c$ and denote $f^*\triangleq f^{(s-c)}$. We have
		$$\gamma(f^*,g)\geq (s-2c)I.$$
	\end{theorem}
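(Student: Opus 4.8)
The plan is, for an arbitrary admissible attack $g$, to bound both error probabilities and the worst-case delay $D(T^{(s-c)})$ from above as the thresholds grow and then combine these through the definition of $\gamma$. The only property of $g$ that I will use is that it leaves the observations of the $s-c$ honest sensors in $\Np$ untouched and cannot depend on them, so under $\Pb^g_\theta$ the honest log-likelihood walks $\{S_i\}_{i\in\Np}$ remain mutually independent with their nominal law. Throughout I take $a=b\to\infty$.

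For the error probabilities, consider $\theta=0$. If $f^{(s-c)}$ declares $H_1$, then at the stopping time at least $r=s-c$ of the walks $S_i$ have reached level $b$; at most $c$ of these belong to compromised sensors, so at least $r-c=s-2c$ of them are honest. Hence $\alpha\le\Pb_0\big[\,\#\{i\in\Np:\sup_k S_i(k)\ge b\}\ge s-2c\,\big]$. For an honest sensor $e^{S_i(k)}$ is a nonnegative $\Pb_0$-martingale of unit mean, so $\Pb_0[\sup_k S_i(k)\ge b]\le e^{-b}$ by optional stopping; together with independence across the honest sensors and a union bound over the $\binom{s-c}{c}$ ways to pick an $(s-2c)$-subset of $\Np$ this gives $\alpha\le\binom{s-c}{c}e^{-(s-2c)b}$, i.e.\ $\log(1/\alpha)\ge(s-2c)b-\log\binom{s-c}{c}$. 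The case $\theta=1$ is symmetric --- now $e^{-S_i(k)}$ is the relevant unit-mean $\Pb_1$-martingale --- and yields $\log(1/\beta)\ge(s-2c)a-\log\binom{s-c}{c}$.

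For the delay I again treat $\theta=0$, the case $\theta=1$ being symmetric. Always $T^{(s-c)}\le\tau^-_{(s-c)}(a)$, and since $\tau^-_{(s-c)}(a)$ is the $(s-c)$-th smallest of the $s$ per-sensor crossing times --- of which at most $c$ belong to compromised sensors --- it is bounded pathwise by the largest of the honest ones, $\tau^-_{(s-c)}(a)\le\max_{i\in\Np}\tau^-_i(a)$. The right-hand side is the maximum of $s-c$ i.i.d.\ nominal crossing times and is therefore stochastically dominated by $\tau^-_{(s)}(a)$ in the attack-free $s$-sensor system. Invoking Theorem~\ref{th:basic_char}(1) with parameter $r=s$ gives $\Eb_0[\tau^-_{(s)}(a)]/a\to1/I_0$, so $\limsup_{a\to\infty}\Eb^g_0[T^{(s-c)}]/a\le1/I_0$, and symmetrically $\limsup_{b\to\infty}\Eb^g_1[T^{(s-c)}]/b\le1/I_1$; hence $\limsup D(T^{(s-c)})/a\le\max\{1/I_0,1/I_1\}=1/I$.

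Combining the two estimates, with $a=b\to\infty$,
\[
\gamma(f^*,g)=\lim\frac{\log(1/\alpha)}{D(T^{(s-c)})}\ge\lim\frac{(s-2c)a-\log\binom{s-c}{c}}{\big(1/I+o(1)\big)\,a}=(s-2c)I .
\]
The definition of $\gamma$ requires $\alpha=\beta$; since the two error bounds are symmetric in $a$ and $b$ and $\alpha,\beta$ are monotone in the thresholds, enforcing $\alpha=\beta$ costs only an $o(a)$ change of thresholds and does not affect the leading order, so this is a minor technical point. I expect the delay estimate to be the main obstacle: the adversary's natural response is to keep every compromised walk away from both thresholds, which forces \emph{all} $s-c$ honest sensors to cross before the correct decision is reached, so one must show that the maximum of $s-c$ i.i.d.\ crossing times still grows like $a/I_0$ rather than $(s-c)\,a/I_0$ --- precisely what Theorem~\ref{th:basic_char}(1) provides, and the reason $r=s-c$ (with $s>2c$, which keeps $r$ in the admissible range $(s/2,s]$ and $s-2c\ge1$) is the right choice.
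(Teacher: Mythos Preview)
Your argument is correct and follows the same decomposition as the paper: bound the delay by the time all honest sensors cross the correct threshold (the paper writes this as $\Eb_\theta^g[\tau^\pm_{(r)}]\le\Eb_\theta[\tau^\pm_{(r+c)}]$ with $r+c=s$, which is exactly your $\max_{i\in\Np}\tau^\pm_i$), and bound the error probability by the event that at least $r-c=s-2c$ honest sensors cross the wrong threshold. The one substantive difference is how the single-sensor wrong-crossing probability is controlled: the paper routes through Theorem~\ref{th:basic_char}(3), whose proof in turn invokes an external asymptotic result (Lemma~\ref{lm:false_alarm_secu}) to obtain $\Pb_\theta[\tau^\mp_i<\infty]\sim Ce^{-a}$, whereas you use the elementary Ville/maximal inequality for the likelihood-ratio martingale to get the clean non-asymptotic bound $e^{-b}$ directly. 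Your route is shorter and self-contained; the paper's route yields the sharper asymptotic constant but that precision is irrelevant for the $(s-2c)I$ conclusion. Your remark about the $\alpha=\beta$ constraint costing only $o(a)$ is also how the paper implicitly handles it.
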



	\begin{proof}
		We show the following inequalities for arbitrary attack $g$ (notice that $\Pb_\theta^g$ and $\Eb_\theta^g$ denote probability and expectation under attack $g$)
		\begin{align}
		\Eb_1^g[\tau^+_{(r)}(b)]&\leq \Eb_1[\tau^+_{(r+c)}(b)]. \label{eq:CT_un_H1}\\
		\Eb_0^g[\tau^-_{(r)}(a)]&\leq \Eb_0[\tau^-_{(r+c)}(a)]. \label{eq:CT_un_H0}\\
		\Pb^g_1[\tau^-_{(r)}(a)\leq \tau^+_{(r)}(b)]&\leq \Pb_1[\tau^-_{(r-c)}(a)\leq \tau^+_{(r-c)}(b)]. \label{eq:FA_un_H1} \\
		\Pb^g_0[\tau^-_{(r)}(a)\geq \tau^+_{(r)}(b)]&\leq \Pb_0[\tau^-_{(r-c)}(a)\geq \tau^+_{(r-c)}(b)]. \label{eq:FA_un_H0}
		\end{align}
		
		Considering the readability of this section, the proof of these four inequalities above is shown in Appendix (Section~\ref{ap:four_ineq}). The left hand side of them are expected stopping times and error probabilities under attack while the right hand side are the ones without attacks. They are obtained considering cumulative log-likelihood ratio and threshold-reached time in the worst case given all admissible attack.
		
		We are ready to quantify the performance under attack with the help of inequalities (\ref{eq:CT_un_H1}) to (\ref{eq:FA_un_H0}).
		On one hand, detection delay can be upper bounded based on (\ref{eq:tau_r}) and (\ref{eq:CT_un_H1}):
		$$\Eb^g_1[T^{(r)}]\leq \Eb^g_1[\tau^+_{(r)}(b)]\leq \Eb_1[\tau^+_{(r+c)}(b)]\sim 
		\frac{b}{I_1}.$$
		in which the first inequality comes from definition of voting rule (\ref{eq:def_vote_rule}).
		On the other hand, error probability can be quantified based on (\ref{eq:beta}) and (\ref{eq:FA_un_H1}):
		\begin{align*}
		\beta\leq&\Pb^g_1[\tau^-_{(r)}(a)\leq \tau^+_{(r)}(b)]\\
		\leq &\Pb_1[\tau^-_{(r-c)}(a)\leq \tau^+_{(r-c)}(b)] \leq Ce^{-(r-c)a},
		\end{align*}
		where C is a constant term.
		Those two inequalities imply
		\begin{align*}
		\lim_{\alpha=\beta\ra 0^+} \frac{\log(1/\beta)}{\Eb^g_1[T^{(r)}]}
		&=\lim_{a=b\ra \ift} \frac{\log(1/\beta)}{\Eb^g_1[T^{(r)}]} \\
		&\geq \lim_{a=b\ra \ift} \frac{(r-c)\cdot a}{b/I_1} = (r-c)I_1.
		\end{align*}
		When $\theta=0$, similar results could be derived from equation (\ref{eq:CT_un_H0}) and (\ref{eq:FA_un_H0}). Thus, by replacing $r$ with $s-c$, the final result is obtained:
		$$ \gamma(f^*,g)\geq (s-c-c)\min\{I_0,I_1\} =(s-2c)I.$$
		The proof is completed.
	\end{proof}
	
	Combining Theorem \ref{th:attack} and \ref{th:detect}, we are ready to show the Nash equilibrium pair of strategies.
	\begin{theorem}\label{th:equilibrium}
		 Detection strategy $f^*$ defined in (\ref{eq:def_vote_rule}) with $r=s-c$ and attack strategy $g^*$ defined in (\ref{eq:def_att_theta0_1}) to (\ref{eq:def_att_theta1_2}) form a Nash equilibrium, i.e. for any admissible detection rule $f$ and attack $g$,
		$$\gamma(f,g^*) \leq \gamma(f^*,g^*)=(s-2c)I \leq \gamma(f^*,g).$$ 
	\end{theorem}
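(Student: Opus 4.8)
The plan is to assemble the equilibrium inequalities directly from the two one-sided bounds already established, namely Theorem \ref{th:attack} and Theorem \ref{th:detect}, so that the only real work is verifying that the proposed pair $(f^*, g^*)$ attains the common value $(s-2c)I$ sandwiched between them. First I would recall what each ingredient gives: Theorem \ref{th:attack} states that for \emph{any} admissible detection strategy $f$, the flip attack $g^*$ forces $\gamma(f, g^*) \leq (s-2c)I$; in particular, taking $f = f^*$, we get $\gamma(f^*, g^*) \leq (s-2c)I$. Symmetrically, Theorem \ref{th:detect} states that for \emph{any} admissible attack $g$, the voting rule $f^* = f^{(s-c)}$ guarantees $\gamma(f^*, g) \geq (s-2c)I$; taking $g = g^*$ gives $\gamma(f^*, g^*) \geq (s-2c)I$. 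Combining the two evaluations at the pair $(f^*, g^*)$ pins down $\gamma(f^*, g^*) = (s-2c)I$ exactly.

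With that equality in hand, the saddle-point inequality \eqref{eq:equ_pair} follows immediately by substituting the common value back into the two theorems: for any admissible $f$, Theorem \ref{th:attack} gives $\gamma(f, g^*) \leq (s-2c)I = \gamma(f^*, g^*)$, which is the left inequality; and for any admissible $g$, Theorem \ref{th:detect} gives $\gamma(f^*, g) \geq (s-2c)I = \gamma(f^*, g^*)$, which is the right inequality. Chaining these yields $\gamma(f, g^*) \leq \gamma(f^*, g^*) = (s-2c)I \leq \gamma(f^*, g)$ for all admissible $f$ and $g$, which is precisely the Nash equilibrium condition, and the proof is complete.

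Honestly, there is no serious obstacle left at this stage — the theorem is a bookkeeping corollary of the two preceding results, and the entire difficulty of the paper has already been discharged in proving Theorem \ref{th:attack} (via Lemma \ref{lm:M_sum-SPRT_per+opt} applied to $\Mp = \Sp \setminus (\Op_1 \cup \Op_2)$) and Theorem \ref{th:detect} (via the four stochastic-dominance inequalities \eqref{eq:CT_un_H1}--\eqref{eq:FA_un_H0} and the asymptotics of Theorem \ref{th:basic_char}). The one point worth a sentence of care is consistency of hypotheses: both feeder theorems are stated under the standing assumption $s > 2c$, so that $\Sp \setminus (\Op_1 \cup \Op_2)$ is nonempty and the voting parameter $r = s-c$ satisfies $s/2 < r \leq s$; I would simply note that this is the operative regime, so the bracketing value $(s-2c)I$ is strictly positive and the equilibrium is nondegenerate. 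If one wished, one could also remark that the equilibrium value $(s-2c)I$ is not claimed to be unique as an equilibrium \emph{strategy} pair — only the value along this pair is characterized — which matches the caveat already made after \eqref{eq:equ_pair}.
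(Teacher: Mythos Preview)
Your proposal is correct and follows exactly the paper's own argument: specialize Theorem \ref{th:attack} to $f=f^*$ and Theorem \ref{th:detect} to $g=g^*$ to pin down $\gamma(f^*,g^*)=(s-2c)I$, then substitute this value back into the two one-sided bounds to obtain the saddle-point inequalities. The additional remarks you make about the standing assumption $s>2c$ and the non-uniqueness of the equilibrium are accurate and consistent with the paper.
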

	\begin{proof}
		Set the detector in Theorem \ref{th:attack} as $f^*$ and attack in Theorem \ref{th:detect} as $g^*$ and we can obtain $\gamma(f^*,g^*) \geq(s-2c)I$ and $\gamma(f^*,g^*)\leq(s-2c)I$ at the same time. Substituting $(s-2c)I$ with $\gamma(f^*,g^*)$ in theorem \ref{th:attack} and \ref{th:detect} leads to the result.
	\end{proof}
	\begin{remark}
		The payoffs for players of the game are $\gamma(f,g)$ (for detector $f$) and $-\gamma(f,g)$ (for attacker $g$). Notice that the strategy set for this game is non-compact, the Nash equilibrium does not necessarily exist. Our result actually proved the existence of Nash equilibrium in addition to a pair of specific strategy.
	\end{remark}
	\begin{remark}\label{rm:save_sample}
		Since the definition of $\gamma(f,g)$ can also be used to evaluate non-sequential detection schemes, we are able to compare their performance with ours. We define 
		$$\tilde{I}\triangleq -\log\left[\inf_{w\in\Rb}\left\{\int_{x\in\Rb}
		\left(\frac{{\rm d}\mu(x)}{{\rm d}\nu(x)}\right)^w {\rm d} \nu(x) \right\}\right].$$
		It has been shown in~\cite{xiaoqiang2018} Theorem 2 that $0<\tilde{I}< I$. The detector performance defined in~\cite{xiaoqiang2018} is the same as ours for fixed sample detecting scheme. However, the value of detector performance in that paper is $(s-2c)\tilde{I}$ which is smaller than ours. In this sense, our scheme is more sample-efficient because the sampling is terminated as soon as there is enough statistical information indicating the real hypothesis.
	\end{remark}
	\begin{remark}\label{rm:computation_better}
		Single time step computation complexity of our detection scheme is $O(s)$ as computing $S_i(k)$ and voting among sensors both have a complexity of $O(s)$. Therefore, the computational complexity is lower than the result in~\cite{xiaoqiang2018} where the sorting algorithm cause a computational complexity of $O(s\log s)$. Moreover, voting detection algorithm is more easily applied to distributed computing because the sensors do not need to send the actual observations to the control center but only need to inform whether the threshold is crossed. System based on our detection algorithm have less information transmission pressure and is more likely to achieve better efficiency and resilience.
	\end{remark}
	\section{Extensions}\label{sec:extension}
	In the previous section, we assume the number of compromised sensors $c$ is known to the system manager. However, in practice the real value may be unknown and what we have is a estimation of its upper bound. It can be seen as a design parameter denoting how many sensor corruptions the system can tolerate. 
	In this section, we study the condition where we have an upper bound $\oc$ and the actual number of compromised sensors $c$  can take value in $\{0,1,2,\dots,\oc\}$.
	
	We denote the voting detection rule with $r=s-\oc$ as $\tilde{f}\triangleq f^{(s-\oc)}$. We have the following Theorem revealing the lower bound of its performance.
	\begin{theorem}\label{th:extension1}
		Given detector $\tilde{f}$, assume $c$ is the actual number of compromised sensors  and $c\leq\oc<s/2$. Under any admissible attack, we have 
		$$\gamma(\tilde{f},g)\geq (s-\oc-c)I.$$
	\end{theorem}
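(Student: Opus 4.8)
The plan is to adapt the proof of Theorem~\ref{th:detect} almost verbatim, tracking how the voting threshold $r=s-\oc$ interacts with the actual (smaller) number of compromised sensors $c$. The key observation is that the four worst-case inequalities \eqref{eq:CT_un_H1}--\eqref{eq:FA_un_H0} were derived for an arbitrary admissible attack that may corrupt any $c$ sensors; their right-hand sides involve shifting the order statistic index by $\pm c$. Since $\tilde f = f^{(r)}$ with $r=s-\oc$, substituting into \eqref{eq:CT_un_H1} and \eqref{eq:FA_un_H1} (and their $\theta=0$ analogues) gives detection-delay and error-probability bounds governed by the indices $r+c = s-\oc+c$ and $r-c = s-\oc-c$ respectively.

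The main steps are as follows. First I would invoke \eqref{eq:CT_un_H1}: $\Eb^g_1[\tau^+_{(r)}(b)] \leq \Eb_1[\tau^+_{(r+c)}(b)]$, and combine it with part (1) of Theorem~\ref{th:basic_char}, which gives $\Eb_1[\tau^+_{(r+c)}(b)] \sim b/I_1$ as $a=b\to\infty$ (the order-statistic index $r+c = s-\oc+c$ still lies in $\{1,\dots,s\}$ since $c\le\oc$, so the limit theorem applies). Hence $\Eb^g_1[T^{(r)}] \leq \Eb^g_1[\tau^+_{(r)}(b)] \lesssim b/I_1$. Second, for the error probability I would use \eqref{eq:FA_un_H1}: $\Pb^g_1[\tau^-_{(r)}(a)\leq\tau^+_{(r)}(b)] \leq \Pb_1[\tau^-_{(r-c)}(a)\leq\tau^+_{(r-c)}(b)]$, and then apply \eqref{eq:beta} with the index $r-c = s-\oc-c$ in place of $r$, yielding $\beta \leq C e^{-(s-\oc-c)a}$ for some constant $C$. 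Combining these two bounds exactly as in the proof of Theorem~\ref{th:detect} gives
\[
\lim_{\alpha=\beta\to 0^+}\frac{\log(1/\beta)}{\Eb^g_1[T^{(r)}]} \;\geq\; \lim_{a=b\to\infty}\frac{(s-\oc-c)\,a}{b/I_1} \;=\; (s-\oc-c)I_1.
\]
Third, I would repeat the argument under $\theta=0$ using \eqref{eq:CT_un_H0}, \eqref{eq:FA_un_H0}, the second halves of Theorem~\ref{th:basic_char}(1) and \eqref{eq:alpha}, obtaining the symmetric bound $(s-\oc-c)I_0$. Taking the worse of the two states and recalling $D(T^{(r)}) = \max_{\theta}\Eb^g_\theta[T^{(r)}]$ and $I=\min\{I_0,I_1\}$ gives $\gamma(\tilde f,g) \geq (s-\oc-c)I$.

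I do not expect a serious obstacle here, since every ingredient is already established; the only thing to check carefully is that the index shifts keep the relevant order statistics well-defined, i.e.\ that $1 \le s-\oc-c$ and $s-\oc+c \le s$. The first holds because the hypothesis $\oc < s/2$ together with $c \le \oc$ gives $\oc + c < s$, hence $s-\oc-c \ge 1$; the second holds because $c\le\oc$ forces $s-\oc+c\le s$. One should also note that $r = s-\oc > s/2$ is required for the voting rule $f^{(r)}$ to be well-defined in \eqref{eq:def_vote_rule}, which again follows from $\oc < s/2$. The only mildly delicate point, as in Theorem~\ref{th:detect}, is verifying that the constant $C$ in the error bound does not interfere with the limit, but this is handled identically: $\log C$ is $o(a)$ as $a\to\infty$, so it is absorbed. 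With these sanity checks in place, the proof is a direct transcription of the Theorem~\ref{th:detect} argument with $\oc$ replacing one of the two $c$'s that produced the coefficient $s-2c$.
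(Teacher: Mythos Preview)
Your proposal is correct and follows essentially the same approach as the paper. The paper's own proof is a one-line remark that Theorem~\ref{th:detect} goes through verbatim once $r=s-c$ is replaced by $r=s-\oc$; you have simply written out that substitution in full, together with the index-range sanity checks, which is exactly what is needed.
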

	\begin{proof}
	In this setting, Theorem \ref{th:detect} still holds true and the only difference is the choice of $r$. Thus, the result is obtained by substituting $s-c$ with $s-\oc$.
	\end{proof}
	\begin{remark}
		The performance loss is in proportional to the sum of estimation number of corruption $\oc$ and the actual number of corruption $c$. If $c$ is fixed, excessive $\oc>c$ will cause unnecessary performance loss.
	\end{remark}

The result in Theorem~\ref{th:extension1} implies the performance lower bound is $(s-\oc)I$ when all sensors are benign. We present it in the following Corollary formally.
\begin{corollary}\label{rm:effi_lower_bound}
	When there is no attack, i.e. $c=0$, performance is lower bounded: $$\gamma(\tilde{f},g=\bm{0})\geq (s-\oc)I.$$
\end{corollary}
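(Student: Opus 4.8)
The plan is to obtain the statement as an immediate specialization of Theorem~\ref{th:extension1}. The detector $\tilde f = f^{(s-\oc)}$ is exactly the voting rule treated there, and the "no attack" scenario corresponds to the choice $g = \bm 0$, which is a legitimate instance of the admissible-attack framework of Assumption~\ref{assumpt:sparse}: it is the trivial bias with $\Cp = \emptyset$, hence with actual corruption count $c = 0$. So first I would verify that the hypotheses of Theorem~\ref{th:extension1} are met with $c = 0$, namely $c \le \oc < s/2$; since $c = 0 \le \oc$ and $\oc < s/2$ is precisely the standing assumption of this section, the hypothesis holds. Invoking Theorem~\ref{th:extension1} with this value of $c$ then yields $\gamma(\tilde f, g=\bm 0) \ge (s - \oc - c)I = (s-\oc)I$, which is the claim.

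Alternatively, one can derive the bound directly, which also makes transparent why the coefficient is what it is. With $g = \bm 0$ all $s$ sensors are honest, so the four worst-case inequalities (\ref{eq:CT_un_H1})--(\ref{eq:FA_un_H0}) used in the proof of Theorem~\ref{th:detect} hold with $c$ replaced by $0$ (in fact they become equalities). Hence, for $r = s - \oc$, Theorem~\ref{th:basic_char}(2) gives $\Eb_1[T^{(r)}] \lesssim b/I_1$ and $\Eb_0[T^{(r)}] \lesssim a/I_0$, while Theorem~\ref{th:basic_char}(3) gives $\beta \lesssim e^{-r a}$ and $\alpha \lesssim e^{-r b}$; substituting these into the definition (\ref{eq:per_def}) of $\gamma$ produces $\gamma(\tilde f, \bm 0) \ge r \min\{I_0,I_1\} = (s-\oc)I$.

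There is no substantive obstacle here: the corollary is a pure specialization, and the only things to check are that $c = 0$ satisfies the range constraint $c \le \oc < s/2$ and that "no attack" is a valid admissible attack. The one point worth stating explicitly is that the corollary inherits the hypothesis $\oc < s/2$ from Theorem~\ref{th:extension1}; outside that regime the voting parameter $r = s - \oc$ would fail $r > s/2$ and the rule $f^{(r)}$ in (\ref{eq:def_vote_rule}) would not even be well-defined.
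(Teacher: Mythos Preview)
Your proposal is correct and matches the paper's approach exactly: the paper states the corollary as an immediate consequence of Theorem~\ref{th:extension1} (it does not even write out a formal proof, merely noting that the result of Theorem~\ref{th:extension1} ``implies the performance lower bound is $(s-\oc)I$ when all sensors are benign''), and your first argument is precisely that specialization with $c=0$. Your alternative direct derivation is also valid but goes beyond what the paper does.
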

\begin{remark}
	$\gamma(\tilde{f},g=\bm{0})$ could be seen as the detection efficiency of voting rule at normal operation (attacker is absent). The increasing of $\oc$ will sacrifice detection performance in absence of attack while gaining better system resilience. Thus, sufficient knowledge about the attacker (e.g. how many sensors will be compromised) will be helpful for system efficiency-security trade off. Since the equilibrium strategy pair is not unique, questing for a detection rule who can achieve maximum performance when the attack is present and absent simultaneously is meaningful and could be our future work.
\end{remark}
\section{Simulation}\label{sec:simu}
In this section, we provide some numerical examples to verify the results established in the previous sections. We assume the observations of sensors follow i.i.d. distribution of $N(-1,1)$\footnote{$N(p,q^2)$ represent Normal distribution with mean $p$ and variance $q^2$.} when $\theta=0$ and $N(1,1)$ when $\theta=1$. In this case $I=I_0=I_1=2$. 

We set $s=10$ and $c$ varies from 0 to 4. In Fig. \ref{Fig:euqlibrium}, detection and attack strategy are $f^*$ and $g^*$ respectively. We calculate detection delay $D(T)$ and error probability $\alpha$ with threshold $a=b$ vary from $5\times10^0$ to $1\times10^5$ for each fixed $c$. The result $\frac{\log(1/\alpha)}{D(T)}$ is normalized by $I$ and should tend to $s-2c$ according to Theorem \ref{th:equilibrium}. To simulate the error probability with higher accuracy, we adopt the importance sampling approach \cite{rubinstein2016simulation}.
\begin{center}
	\begin{figure}[ht]
%
%
\definecolor{mycolor1}{rgb}{0.00000,1.00000,1.00000}%
\begin{tikzpicture}

\begin{axis}[%
width=2.5in,
height=1.8in,
at={(0.758in,0.481in)},
scale only axis,
xmode=log,
xmin=1,
xmax=100000,
xminorticks=false,
ymin=0,
ymax=10,
xlabel=threshold ${a,b}$ ${(a=b)}$ ,
ylabel=$\gamma{(f^*,g^*)}/ I$, 
axis background/.style={fill=white},
xmajorgrids,
ymajorgrids
]
\addplot [color=blue, line width=1.0pt, forget plot]
  table[row sep=crcr]{%
5	0.177009103527647\\
10	0.597202373440582\\
50	1.27558006193923\\
100	1.48714766759008\\
500	1.63589176494012\\
1000	1.79527886291003\\
5000	1.87871255158055\\
10000	1.90627289864839\\
50000	1.95984827231863\\
100000	1.97508230145343\\
};
\addplot [color=red, dashed, line width=1.0pt, forget plot]
  table[row sep=crcr]{%
5	2.11356211292114\\
10	2.40482075925231\\
50	3.34264255102409\\
100	3.43856186200425\\
500	3.72161401558856\\
1000	3.73988211015632\\
5000	3.89554006965826\\
10000	3.90368404316336\\
50000	3.96820670218695\\
100000	3.98048046250919\\
};
\addplot [color=green, dotted, line width=1.0pt, forget plot]
  table[row sep=crcr]{%
5	4.12118695708407\\
10	4.50182344900988\\
50	5.15249723845063\\
100	5.29059390861958\\
500	5.72738989237863\\
1000	5.77332388149342\\
5000	5.9007402258757\\
10000	5.94103918993742\\
50000	5.9618442509706\\
100000	5.97665927078343\\
};
\addplot [color=mycolor1, dashdotted, line width=1.0pt, forget plot]
  table[row sep=crcr]{%
5	5.52120457440255\\
10	6.40712336907693\\
50	7.08376588159133\\
100	7.33452763069052\\
500	7.56969931360208\\
1000	7.70566359148392\\
5000	7.88828607607036\\
10000	7.91719441468106\\
50000	7.96336871236542\\
100000	7.97758968188271\\
};
\addplot [color=black,  line width=1.0pt, forget plot]
  table[row sep=crcr]{%
5	7.46654891208226\\
10	7.26551741141603\\
50	8.79342187614214\\
100	9.03767581866198\\
500	9.52452750817051\\
1000	9.68949470989284\\
5000	9.86935913752838\\
10000	9.91043573842608\\
50000	9.96231472108835\\
100000	9.96702424777133\\
};
\end{axis}
\end{tikzpicture}%
	\caption{Normalized performance of equilibrium strategy pair $(f^*,g^*)$ when $s = 10$ for $c = 0$ (black solid line), $c = 1$ (cyan dash dot line), $c = 2$ (green dot line) , $c = 3$ (red dash line) and $c = 4$ (blue solid line).}
	\label{Fig:euqlibrium}
	\end{figure}
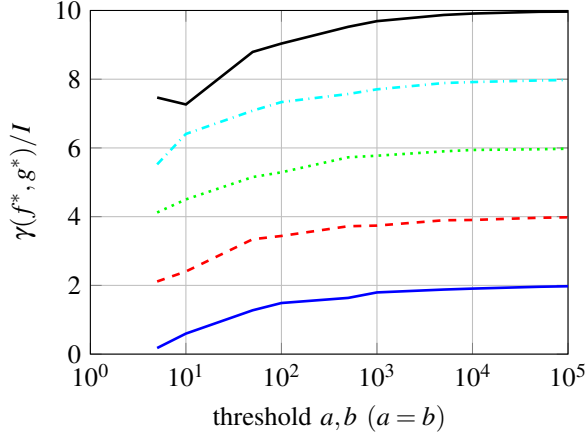
\end{center}

\section{Conclusion}\label{sec:conclusion}
In this paper, we formulate the problem of binary sequential detection in adversarial environment as a game between the detector and the attacker. Detection performance is defined asymptotically by both error probability and Average Sample Number as error probability tends to zero and this value is integrated in the game as payoff which the detector intends to maximize while the attacker intends to minimize. We propose a pair of detection rule and attack strategy and prove them to be an equilibrium pair of the game.
Furthermore, the performance in condition where number of compromised sensor is unknown and where all sensors are benign is quantified. The choice of detection rule parameter $r$ is discussed and result is corroborated by numerical simulation. The future work includes the trade-off between system's security and efficiency as well as discussion about achievability of optimal security and efficiency.
	

	\section{Appendix}
	\subsection{Proof of Lemma \ref{lm:M_sum-SPRT_per+opt}}\label{ap:lemma_M}	
	First we claim the inequality in (\ref{eq:M_sum}) is true by contradiction to optimum character of SPRT in \cite{wald_optimum1947}.
	
	Denote $|\Mp|I$ as $I_\Mp$. Assume there exists a detection rule $f'$ that satisfy $\gamma(f',g=\bm{0})=I'>I_\Mp$. We consider the condition where probability of type-I error equals to that of type-II error. Those error probabilities is denoted as $p_\Mp\triangleq\alpha(f_\Mp)=\beta(f_\Mp)$ and $p'\triangleq\alpha(f')=\beta(f')$. The stopping time is denoted as $T_\Mp$ and $T'$ respectively. By definition $\forall \epsilon>0$ there exists $P>0$ such that $\forall\ 0<p_\Mp,p'<P$
	$$\left|\frac{\log(1/p_\Mp)}{D(T_\Mp)}-I_\Mp\right|<\epsilon,$$
	$$\left|\frac{\log(1/p')}{D(T')}-I'\right|<\epsilon.$$
	Let $p'=p_\Mp$ and choose $\epsilon<(I'-I_\Mp)/2$ then one obtains
	\begin{align*}
	D(T')<\frac{\log(1/p')}{I'-\epsilon}<\frac{\log(1/p_\Mp)}{I_\Mp+\epsilon}<D(T_\Mp),
	\end{align*}
	which implies $\min_\theta \{\Eb_\theta(T')\}$ $<$ $\min_\theta\{\Eb_\theta(T_\Mp)\}$. That contradicts to $\{\Eb_\theta(T')\geq \Eb_\theta(T_\Mp)\},\theta=0,1$ (coming from optimality of Wald's Test). Thus completes the proof.
	\begin{remark}\label{rm:multi_sensor_opt_general}
		The optimality of SPRT is originally proved for single sensor scenario. As the observation from sensors are i.i.d. distributed, they could be formulated as a $s$ dimension observation vector and each hypothesis represents a joint distribution. For example, when $\theta=0$, $\nu(X_1=x_1,\dots,X_s=x_s)\triangleq \prod_{i=1}^{s}\nu(X_i=x_i)$. And the optimality (minimal Average Sample Number with same error probability) still holds.
	\end{remark}
	 
	In the following we show the asymptotic performance of sum-SPRT. Some of useful asymptotic properties of $\alpha,\beta$ and $\Eb_\theta[T_\Mp]$ has been provided by Berk \cite{berk_asy_aspects_of_SA1973} and we show some equivalent statements in the following Lemma.
	\begin{lemma}\label{lm:asy_property_Berk}
		Assume $a,b>0$, $\Mp\neq\emptyset$ and we have the following results hold with probability one: \\
		$$\lim_{\alpha=\beta\ra0^+}\frac{1}{a}\log\frac{1}{\beta}=1,\, \
		\lim_{\alpha=\beta\ra0^+}\frac{1}{b}\log\frac{1}{\alpha}=1,$$
		$$\lim_{\alpha=\beta\ra0^+}\frac{\Eb_0[T_\Mp]}{a}=\frac{1}{|\Mp|I_0},\,  \
		\lim_{\alpha=\beta\ra0^+}\frac{\Eb_1[T_\Mp]}{b}=\frac{1}{|\Mp|I_1}.$$
	\end{lemma}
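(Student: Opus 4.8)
The plan is to recognize the $\Mp$-sum-SPRT as an ordinary single-stream SPRT and then import the classical asymptotics (Wald's bounds together with renewal-theoretic overshoot control, as in \cite{berk_asy_aspects_of_SA1973}). Set $Z(k)\triangleq\sum_{i\in\Mp}L_i(k)$. Because the $x_i(k)$ are i.i.d.\ across sensors and time, $\{Z(k)\}_{k\ge1}$ is an i.i.d.\ sequence, $S_\Mp(n)=\sum_{k=1}^{n}Z(k)$ is precisely the running statistic of the SPRT with continuation region $(-a,b)$ applied to the stream $\{Z(k)\}$, and $T_\Mp$, $\alpha$, $\beta$ are its stopping time and error probabilities. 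By linearity and the definitions of $I_0,I_1$, $\Eb_1[Z(1)]=|\Mp|I_1>0$ and $\Eb_0[Z(1)]=-|\Mp|I_0<0$, both finite by Assumption~\ref{asm:finiteKLs}; thus the aggregated stream is a genuine SPRT whose two K--L numbers are $|\Mp|I_1$ and $|\Mp|I_0$, and its stopping time is a.s.\ finite with finite mean (nonzero drift). This last point is what the phrase ``with probability one'' in the statement refers to.

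First I would treat the error probabilities. Wald's likelihood-ratio inequalities give $\alpha\le e^{-b}$ and $\beta\le e^{-a}$, hence $\tfrac1b\log\tfrac1\alpha\ge1$ and $\tfrac1a\log\tfrac1\beta\ge1$. For the reverse direction, use the change-of-measure identity $\alpha=\Eb_1[e^{-S_\Mp(T_\Mp)}\mathbf{1}_{\{\text{accept }H_1\}}]$ and write $S_\Mp(T_\Mp)=b+R_b$ on $\{\text{accept }H_1\}$ with overshoot $R_b\ge0$; boundedness of $\Eb_1[R_b]$ as $b\ra\ift$ (the renewal-theoretic ingredient of \cite{berk_asy_aspects_of_SA1973}) yields $\alpha\ge c\,e^{-b}$ for some $c>0$, so $\tfrac1b\log\tfrac1\alpha\ra1$, and symmetrically $\tfrac1a\log\tfrac1\beta\ra1$. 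Imposing $\alpha=\beta$ then forces $a\sim b$ as $\alpha\ra0^+$, which is what permits all four limits to be stated under the single regime $\alpha=\beta\ra0^+$.

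Next I would handle the expected sample sizes through Wald's identity $\Eb_\theta[S_\Mp(T_\Mp)]=\Eb_\theta[Z(1)]\,\Eb_\theta[T_\Mp]$, valid since $\Eb_\theta[T_\Mp]<\ift$. Under $\theta=1$, decomposing $S_\Mp(T_\Mp)$ at the two thresholds gives $\Eb_1[S_\Mp(T_\Mp)]=b(1-\beta)-a\beta+(\text{overshoot terms})\sim b$, because $\beta\ra0$, $a\beta\le ae^{-a}\ra0$, $a\sim b$, and the overshoot is $O(1)$; dividing by $\Eb_1[Z(1)]=|\Mp|I_1$ gives $\Eb_1[T_\Mp]\sim b/(|\Mp|I_1)$, the third limit. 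The case $\theta=0$ is identical with $(a,I_0)$ replacing $(b,I_1)$.

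The main obstacle is the two-sided control of the boundary overshoot --- that $\Eb_\theta[R_b]$ and $\Eb_\theta[R_a]$ remain bounded as the thresholds diverge, so that simultaneously $\log(1/\alpha)=b+O(1)$ and $\Eb_1[S_\Mp(T_\Mp)]=b+O(1)$. This is exactly the content we borrow from \cite{berk_asy_aspects_of_SA1973}; for the first-order ratios claimed here it needs only that $Z(1)=\sum_{i\in\Mp}L_i(1)$ have a finite first moment and be non-degenerate, both inherited from Assumption~\ref{asm:finiteKLs}. Hence the genuine work reduces to verifying that Berk's hypotheses transfer to the aggregated stream and to tracking the scaling factor $|\Mp|$.
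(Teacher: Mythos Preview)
Your proposal is correct and follows essentially the same route as the paper: both arguments reduce the $\Mp$-sum-SPRT to an ordinary single-stream SPRT on the aggregated increments $Z(k)=\sum_{i\in\Mp}L_i(k)$, observe that $\Eb_\theta[Z(1)]=\pm|\Mp|I_\theta$, and then invoke Berk's asymptotics \cite{berk_asy_aspects_of_SA1973} for the resulting one-dimensional test. The only difference is expository---you unpack the mechanism behind Berk's theorems (Wald's likelihood-ratio bounds, the change-of-measure identity, overshoot control, and Wald's identity) whereas the paper simply cites them and records the scaling factor $|\Mp|$.
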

	\begin{proof}
	Those results of single sensor (i.e. $|\Mp|=1$) follow straightforwardly from \cite{berk_asy_aspects_of_SA1973} Theorem $2.1$ and $2.2$. We focus on the generalization of multi-sensor scenario. 
	
	Firstly, the asymptotic characteristic of error probabilities do not rely on sensor numbers for the same reason as in Remark \ref{rm:multi_sensor_opt_general} and therefore first two equalities are true. We concentrate on the Average Sample Number. First we have the following from \cite{berk_asy_aspects_of_SA1973}. For every $i\in\Sp$
	$$\lim_{\alpha=\beta\ra0^+}\frac{\Eb_0[T_i]}{a}=\frac{1}{\Eb_0[L_i(k)]},\,  \
	\lim_{\alpha=\beta\ra0^+}\frac{\Eb_1[T_i]}{b}=\frac{1}{\Eb_1[L_i(k)]},$$
	where $T_i$ is the stopping time of $T_\Mp$ when $\Mp$ is a singleton set $\{i\}$. According to definition in (\ref{eq:def_S_M}), the summed log-likelihood ratio over set $\Mp$ at time $k$ is $\sum_{i\in\Mp}L_i(k)$. Therefore,
	$$\Eb_0\left[\sum_{i\in\Mp}L_i(k)\right]=\sum_{i\in\Mp}\Eb_0\left[L_i(k) \right]=|\Mp|I_0.$$
	The similar result could be obtained when $\theta=1$. Proof of Lemma \ref{lm:asy_property_Berk} is accomplished.
	\end{proof}

	We are ready to verify the equality in (\ref{eq:M_sum}).
	For readability purposes, the limitations without subscripts in the following equation means limits as $\alpha=\beta$ tends to $0^+$. With results above, noticing $a\sim b$, one obtains
	\begin{align*}
		\gamma(f_\Mp,g=\bm{0})=&\frac{\lim \log(1/\alpha)}{\max\{\lim\Eb_0[T_\Mp],\lim\Eb_1[T_\Mp]\}} \\
		=&\min \left\{ \lim\frac{b}{a} \cdot \lim\frac{a}{\Eb_0[T_\Mp]}, \lim\frac{ b}{\Eb_1[T_\Mp]} \right\}\\
		=&\min\left\{|\Mp|I_0,|\Mp|I_1\right\}=|\Mp|\cdot I.
	\end{align*}
	Proof of Lemma \ref{lm:M_sum-SPRT_per+opt} is finished.

\subsection{Proof of Theorem \ref{th:basic_char}}\label{ap:theo_prelimi}
Before we prove results in Theorem~\ref{th:basic_char}, we need to present two lemmas. The following paragraph is the shared assumption of those two lemmas.

Assume $\bm{x}(k)\triangleq[x_1(k),\dots,x_m(k)]$ is a sequence of independently and identically distributed random vectors of $m$ dimensions. Denote the probability measure and expectation with respect to it as $\Pb,\Eb$ and denote the expectation of every element as $\eta\triangleq\Eb[x_1(k)]$. We assume $0<\eta<\infty$. Define random walk $S_i(n)\triangleq\sum_{k=1}^{n}x_i(k)$.
	\begin{lemma}\label{lm:cross_time_secu}
	Define two stopping time in the following ($b>0$)
	$$\overline{T}(b)\triangleq\inf\{n\in\Zb^+, \max_{1\leq i\leq m}S_i(n)\geq b\},$$
	$$\underline{T}(b)\triangleq\inf\{n\in\Zb^+, \min_{1\leq i\leq m}S_i(n)\geq b\}.$$
	Then we have
	$$\lim_{b\ra\ift} \Eb\left|\frac{\overline{T}(b)}{b}-\frac{1}{\eta}\right|=0,$$
	$$\lim_{b\ra\ift}\Eb\left|\frac{\underline{T}(b)}{b}-\frac{1}{\eta}\right|=0.$$
\end{lemma}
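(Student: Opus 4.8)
The plan is to reduce both assertions to the classical first--passage asymptotics for a single random walk with positive drift, and then to pass from the single walk to the coordinatewise maximum and minimum by sandwiching. For each coordinate $i$ write $\tau_i(b)\triangleq\inf\{n\in\Zb^+:S_i(n)\geq b\}$ for the first passage time over $b$, and $\hat{\tau}_i(b)\triangleq 1+\sup\{n\geq 0:S_i(n)<b\}$ for the first time after which $S_i$ never again drops below $b$. Then
\begin{equation*}
\overline{T}(b)=\min_{1\leq i\leq m}\tau_i(b),\qquad \max_{1\leq i\leq m}\tau_i(b)\ \leq\ \underline{T}(b)\ \leq\ \max_{1\leq i\leq m}\hat{\tau}_i(b),
\end{equation*}
since $\max_i S_i(n)\geq b$ precisely when some coordinate has already reached $b$, and since beyond time $\max_i\hat{\tau}_i(b)$ every coordinate stays above $b$ forever.

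First I would establish the almost sure limits. By the strong law of large numbers $S_i(n)/n\ra\eta$, which gives $\tau_i(b)/b\ra 1/\eta$ a.s. as $b\ra\ift$ (monotonicity in $b$ removes any issue of taking a continuum limit). Applying the same bound with $\eta-\epsilon$ in place of $\eta$, for arbitrary $\epsilon>0$, shows $\limsup_b\hat{\tau}_i(b)/b\leq 1/(\eta-\epsilon)$, while $\hat{\tau}_i(b)\geq\tau_i(b)$ gives the matching lower bound, so $\hat{\tau}_i(b)/b\ra 1/\eta$ a.s.\ as well. With the sandwich above this yields $\overline{T}(b)/b\ra 1/\eta$ and $\underline{T}(b)/b\ra 1/\eta$ almost surely.

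It remains to upgrade to $L^1$, which for nonnegative variables converging a.s.\ to a constant amounts to convergence of the expectations. For $\overline{T}(b)$ this is immediate: $\overline{T}(b)\leq\tau_1(b)$ and, for a positive--drift walk, $\Eb[\tau_1(b)]/b\ra 1/\eta$ (classical; e.g.\ via Wald's identity together with the fact that the overshoot has expectation $o(b)$ when $\Eb[x_1(k)]$ is finite), so $\limsup_b\Eb[\overline{T}(b)/b]\leq 1/\eta$, and Fatou's lemma supplies the reverse inequality. For $\underline{T}(b)$, Fatou again gives $\liminf_b\Eb[\underline{T}(b)/b]\geq 1/\eta$, so the crux is the matching upper bound on $\limsup_b\Eb[\underline{T}(b)/b]$. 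The natural route is to bound $\underline{T}(b)$ by the first time all $m$ coordinates cross the slightly higher level $(1+\epsilon)b$ and never return below $b$ afterwards --- an event of probability tending to one --- and to show that the leftover contribution is $o(b)$ in expectation. This is the step I expect to be the main obstacle: unlike $\tau_i(b)$, the time $\hat{\tau}_i(b)$ is controlled by how long $S_i$ can wander back below $b$, i.e.\ by the size of its downward overshoots, and bounding $\Eb[\max_i\hat{\tau}_i(b)]$ --- equivalently, establishing uniform integrability of $\{\hat{\tau}_i(b)/b\}_{b\geq 1}$, after which domination by $\sum_i\hat{\tau}_i(b)/b$ finishes it --- needs a genuine estimate. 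Under a mild second--moment type condition on the negative part of $x_1(k)$ this is standard, and in the applications of interest (e.g.\ $x_i(k)=L_i(k)$ in Theorem~\ref{th:basic_char}) it is available; without such a condition one must argue more carefully that no coordinate can loiter below $b$ long enough to perturb the $b\ra\ift$ normalization.
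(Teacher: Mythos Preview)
The paper does not prove this lemma from scratch: its entire proof is the sentence ``It's a special case of \cite{farrell1964-1} Theorem~1.'' So your proposal is not so much a different route as an attempt to reconstruct what the paper outsources to Farrell.

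Your reduction is sound: the identity $\overline{T}(b)=\min_i\tau_i(b)$ and the sandwich $\max_i\tau_i(b)\leq\underline{T}(b)\leq\max_i\hat{\tau}_i(b)$ are correct, and from them the almost-sure limits and the $L^1$ convergence for $\overline{T}(b)/b$ follow exactly as you describe. The only substantive gap is the one you flag yourself, namely uniform integrability of $\underline{T}(b)/b$ (equivalently, of $\hat{\tau}_i(b)/b$). You propose a second-moment hypothesis on the negative part of $x_1(k)$ as a sufficient condition and note that it holds in the intended application; but the lemma is stated under the bare assumption $0<\eta<\ift$, so as a proof of the lemma as written this is incomplete. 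Farrell's theorem---which is precisely what the paper invokes---delivers the $L^1$ conclusion under only a finite first moment, so the ``more careful'' argument you allude to at the end is exactly the content the paper is importing from that reference. If you want a self-contained proof you would need to supply that step (e.g.\ via a last-exit decomposition and truncation of the type Farrell uses); otherwise, citing his result at that one point would bring your argument in line with the paper's.
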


\begin{proof}	
	It's a special case of \cite{farrell1964-1} Theorem 1.
\end{proof}

\begin{lemma}\label{lm:false_alarm_secu}
	We assume that there exist $h<0$ so that $\Eb[e^{h x_i}]=1$ and in addition $\Eb[x_i e^{h x_i}]<\ift$.
	The stopping time $\tau^-_i(a)$ is defined same as equation (\ref{eq:def_taua}).
	We have the following result for every $1\leq i\leq m$:
	$$\lim_{a=b\ra \ift}\frac{1}{a}\log{\Pb[\tau^-_i(a)<\ift]}=-|h|.$$
\end{lemma}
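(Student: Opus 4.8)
The statement is a Lundberg-type (ruin-probability) estimate: since $\eta=\Eb[x_1(k)]>0$, the random walk $S_i(n)$ drifts to $+\ift$ by the strong law, so $\{\tau_i^-(a)<\ift\}$ is a rare event, and its exponential decay rate is governed by the negative root $h$ of the cumulant equation $\psi(h)\triangleq\Eb[e^{hx_i}]=1$. The plan is to establish matching exponential upper and lower bounds on $\Pb[\tau_i^-(a)<\ift]$, both driven by the Wald exponential martingale $M_n\triangleq e^{hS_i(n)}$, which is a genuine nonnegative martingale with $M_0=1$ precisely because $\psi(h)=1$.

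For the upper bound I would apply optional stopping to $M_{n\wedge\tau_i^-(a)}$, whose expectation is constantly $1$; letting $n\to\ift$ and using Fatou together with the fact that $M_n\to0$ on $\{\tau_i^-(a)=\ift\}$ (here $\eta>0$ forces $S_i(n)\to+\ift$ a.s.\ and $h<0$), one gets $\Eb\!\left[M_{\tau_i^-(a)};\,\tau_i^-(a)<\ift\right]\le 1$. On that event $S_i(\tau_i^-(a))\le -a$, hence $M_{\tau_i^-(a)}\ge e^{|h|a}$, so $\Pb[\tau_i^-(a)<\ift]\le e^{-|h|a}$ and $\limsup_{a\to\ift}\frac1a\log\Pb[\tau_i^-(a)<\ift]\le -|h|$.

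For the lower bound I would tilt the measure: define $\tilde\Pb$ on $\Fc_k$ by $d\tilde\Pb/d\Pb=M_k$. Under $\tilde\Pb$ the increments are i.i.d.\ with mean $\tilde\eta=\psi'(h)=\Eb[x_ie^{hx_i}]$, which is finite by hypothesis and strictly negative by convexity of $\psi$ (as $\psi(0)=\psi(h)=1$, $\psi'(0)=\eta>0$, $h<0$, so $\psi$ dips below $1$ on $(h,0)$ and $\psi'(h)<0$). Thus the walk has negative drift under $\tilde\Pb$, so $\tilde\Pb[\tau_i^-(a)<\ift]=1$. Changing back via the Radon--Nikodym factor, $\Pb[\tau_i^-(a)<\ift]=\tilde\Eb\!\left[e^{-hS_i(\tau_i^-(a))}\right]$; writing $R_a\triangleq -a-S_i(\tau_i^-(a))\ge0$ for the overshoot, this equals $e^{-|h|a}\,\tilde\Eb\!\left[e^{-|h|R_a}\right]$. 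It therefore suffices to show $\liminf_{a\to\ift}\tilde\Eb\!\left[e^{-|h|R_a}\right]>0$, and since $e^{-|h|R_a}\ge e^{-|h|L}\mathbf 1_{\{R_a\le L\}}$ this reduces to tightness of $\{R_a\}_{a>0}$ under $\tilde\Pb$. That is a standard renewal-theoretic fact: the strictly descending ladder heights of a walk with finite negative mean have finite mean (note $\tilde\Eb|x_1|=\Eb[|x_i|e^{hx_i}]<\ift$, the positive part being automatically bounded), so the overshoot stays tight (indeed converges in distribution). Combining the two bounds yields $\lim_{a\to\ift}\frac1a\log\Pb[\tau_i^-(a)<\ift]=-|h|$.

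The main obstacle is the lower bound, i.e.\ controlling the overshoot $R_a$ of the walk below the level $-a$: if the increments were bounded this would be immediate, but in general one needs the renewal estimate for the overshoot, which is exactly where the extra moment assumption $\Eb[x_ie^{hx_i}]<\ift$ enters. Since this lemma is a special case of classical first-passage results for random walks with drift (e.g.\ \cite{farrell1964-1} and standard ruin theory), the cleanest write-up may instead simply appeal to that literature for the lower bound after deriving the elementary martingale upper bound.
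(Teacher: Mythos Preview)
Your argument is correct and is in fact the standard Cram\'er--Lundberg proof of this first-passage estimate: the Wald martingale $e^{hS_i(n)}$ together with optional stopping gives the clean one-line upper bound $\Pb[\tau_i^-(a)<\ift]\le e^{-|h|a}$, and the exponential tilt by that same martingale reduces the lower bound to tightness of the overshoot under the tilted measure, which is where the moment hypothesis $\Eb[x_ie^{hx_i}]<\ift$ is actually used. Your convexity argument for $\psi'(h)<0$ and your verification that $\tilde\Eb|x_1|<\ift$ are both fine.

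The paper, by contrast, does not prove this lemma at all: it simply observes that the symmetric statement (drift $<0$, root $h>0$) appears in \cite{upper_bound_kugler2013} and invokes it. So your write-up is strictly more informative than the paper's, and your closing remark---that one could just cite the literature---is exactly what the paper does. What your approach buys is self-containment and a clear identification of why the extra moment condition is needed (overshoot control for the lower bound); what the paper's citation buys is brevity, since the result is classical. Either is acceptable here; if you want to match the paper's style you can keep your martingale upper bound (it is short and explicit) and defer the lower bound to the reference.
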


\begin{proof}
	The symmetric result ($\mu<0,h>0$) is provided in \cite{upper_bound_kugler2013} Section 1. As all $x_i$ are identically distributed, results hold for all $1\leq i\leq m$. 
\end{proof}

Now we can proceed to prove Theorem~\ref{th:basic_char}.

\textbf{Part (1)}
	\begin{equation*}
	\lim_{a=b\ra\ift}\Eb_0\left|\frac{\tau^-_{(r)}(a)}{a}-\frac{1}{I_0}\right|=0, \
	\lim_{a=b\ra\ift}\Eb_1\left|\frac{\tau^+_{(r)}(b)}{b}-\frac{1}{I_1}\right|=0. 	
	\end{equation*}
\begin{proof}
	We first prove the latter one and the former one can be dealt with similarly. 

	Define the first time when there are $r$ statistics $S_i(k)$ above threshold $b$ or below threshold $-a$ at the same time:
	\begin{align*}
	T^+_{(r)}(b)\triangleq &\inf_{k\in\Zb^+}\{S_{(s-r+1)}(k)\geq b\}, \\
	T^-_{(r)}(a)\triangleq &\inf_{k\in\Zb^+}\{S_{(r)}(k)\leq -a\},
	\end{align*}
	where $S_{(i)}(k)$ is the ascending ordered cumulative log-likelihood ratio $S_i(k)$, i.e., $S_{(1)}(k)\leq S_{(2)}(k)\leq\dots S_{(s)}(k)$.
	In the absence of attack, for the same $a$ or $b$ we have the following inequality:
	$$\tau^-_{(1)}(a)=T^-_{(1)}(a)\leq \tau^-_{(r)}(a)\leq\tau^-_{(s)}(a)\leq T^-_{(s)}(a),$$
	$$\tau^+_{(1)}(b)=T^+_{(1)}(b)\leq \tau^+_{(r)}(b)\leq\tau^+_{(s)}(b)\leq T^+_{(s)}(b).$$
	It suffices to prove 
	$$\lim_{a=b\ra\ift}\Eb_1\left|\frac{T^+_{(1)}(b)}{b}-\frac{1}{I_1}\right|=0,\ 	
	\lim_{a=b\ra\ift}\Eb_1\left|\frac{T^+_{(s)}(b)}{b}-\frac{1}{I_1}\right|=0.$$
	According to Lemma \ref{lm:cross_time_secu} (notations $\overline{T},\underline{T}$ are also from Lemma \ref{lm:cross_time_secu}),  $T^+_{(s)}(b)=\overline{T}(b)$, $T^+_{(1)}(b)=\underline{T}(b)$. If we set $m=s$ and those elements $x_i(k)$ in vector $\bm{x}(k)$ as log-likelihood ratios $L_i(k)$ from $s$ sensors, statement above is true because exception of log-likelihood ratio by definition equals to K--L divergence.
	Thus, the proof is finished.
\end{proof}

\textbf{Part (2)}
$$\lim_{a=b\ra\ift}\frac{\Eb_0[{T}^{(r)}]}{a}\leq\frac{1}{I_0},\ \lim_{a=b\ra\ift}\frac{\Eb_1[{T}^{(r)}]}{b}\leq\frac{1}{I_1} .$$
\begin{proof}
	We prove the second one and the proof for the first one is similar. According to definition in (\ref{eq:def_vote_rule}), the stopping time of detection rule satisfy
	\begin{equation*}
	T^{(r)}=min\{\tau^-_{(r)}(a),\tau^+_{(r)}(b)\}.
	\end{equation*}
	Therefore,
	$$\lim_{a=b\ra\ift}\frac{\Eb_1[{T}^{(r)}]}{b}\leq\lim_{a=b\ra\ift}\frac{\Eb_1[\tau^+_{(r)}(b)]}{b}=\frac{1}{I_1},$$
	where the equation comes from Part (1). The proof is completed.
%
\end{proof}

\textbf{Part (3)}
	$$\lim_{a=b\ra\ift} \frac{1}{b}\log \Pb_0[\tau^+_{(r)}(b)\leq\tau^-_{(r)}(a)]\leq -r,$$
	$$\lim_{a=b\ra\ift} \frac{1}{a}\log \Pb_1[\tau^-_{(r)}(a)\leq\tau^+_{(r)}(b)]\leq -r.$$
\begin{proof}
	We prove the second one and the first one can be proved similarly. As what we have done in Part (1), we set those elements $x_i(k)$ in vector $\bm{x}(k)$ in Lemma \ref{lm:false_alarm_secu} as log-likelihood ratios $L_i(k)$ from $s$ sensors.
	We obtain the following asymptotic result (\ref{eq:h=+-1}) from Lemma \ref{lm:false_alarm_secu} by showing that conditions in the lemma is satisfied by $h=-1$. 
	\begin{equation}\label{eq:h=+-1}
	\Pb_1[\tau^-_i(a)<\ift]\sim Ce^{-a}\quad \forall i\in\Sp,\ a=b\ra\ift,
	\end{equation}
	where $C$ is a constant term. 
	
	In order to prove (\ref{eq:h=+-1}), it suffices to show that $\Eb_1[e^{-L_i(k)}]=1$ and $\Eb_1[L_i(k)e^{-L_i(k)}]<\ift$ for every $i\in\Sp,k\in\Zb^+$. First we have
	$$\Eb_1[e^{-L_i(k)}]=\int_{\Rb} \left(\frac{{\rm d}\mu(x)}{{\rm d}\nu(x)}\right)^{-1} {\rm d} \mu(x)=\int_{\Rb}{\rm d}\nu(x)=1.$$
	For the second one,
	$$\Eb_1[L_i(k)e^{-L_i(k)}]=\int_{\Rb} \log\left(\frac{{\rm d}\mu(x)}{{\rm d}\nu(x)}\right)  {\rm d} \nu(x)=I_1<\ift. $$
	
	Considering the i.i.d. setting, (\ref{eq:h=+-1}) is proofed. Event $\{\tau^-_{(r)}(a)\leq \tau^+_{(r)}(b)\}$ implies that there exists a index set $\Rp\triangleq\{i_1,i_2,\dots,i_r\}\subseteq\Sp$ that for every $i$ in the set, event $\{\tau^-_i(a)<\ift\} $ occurs.
	Considering the independence of every cumulative log-likelihood ratio, we obtain
	\begin{align*}
	&\Pb_1[\tau^-_{(r)}(a)\leq \tau^+_{(r)}(b)] \\
	\leq&\bigcup_{|\Rp|=r,\Rp\subseteq\Sp}\ \Pb_1\left[\max_{i\in\Rp}\tau^-_i(a)<\ift\right]\\
	\leq &\bigcup_{|\Rp|=r,\Rp\subseteq\Sp}\ \prod_{i\in\Rp}\Pb_1[\tau^-_i(a)<\ift]\sim \binom{s}{r}Ce^{-ra}. 
	\end{align*}
	This directly leads to result of Part (3) as the logarithm of constant term will converge to zero when divided by $a$.
\end{proof}

\subsection{Proof of inequalities (\ref{eq:CT_un_H1}) to (\ref{eq:FA_un_H0})}\label{ap:four_ineq}
\begin{proof}
	We prove (\ref{eq:CT_un_H1})(\ref{eq:FA_un_H1}) in this section and (\ref{eq:CT_un_H0})(\ref{eq:FA_un_H0}) can be proved in the same way.
	In order to analyze stopping time $\tau^+_{(r)}(b)$, without loss of generality, we assume $S_i(k)$ has been ordered by subscript for some fixed $k$, i.e.
	$$S_1(k)\leq S_2(k)\leq\cdots\leq S_s(k).$$
	The worst case of stopping time $\tau^+_{(r)}(b)$ under attack is that the largest $c$ cumulative log-likelihood ratios $S_i(k)$ are assigned to be smaller than all other ones. If we denote cumulative log-likelihood ratios of compromised sensor as $S'_i(k)$, then in the worst case the largest $c$ sensors (index $s-c+1$ to $s$) is manipulated and assigned to be small enough so that
	$$\max_{s-c+1\leq i\leq s} S'_i(k)<S_1(k).$$
	In this condition, the stopping time is reached if other $r$ honest $S_i(k)$ are no smaller than threshold $b$, i.e.
	$$\tau^+_{(r)}(b)\leq \inf_{k}\{S_{s-r-c+1}(k)\geq b\},$$
	which implies (\ref{eq:CT_un_H1}).
	For error probability $\Pb^g_1[\tau^-_{(r)}(a)\leq \tau^+_{(r)}(b)]$, the worst case is that cumulative log-likelihood ratios from compromised senors satisfy $S_i(n)\leq-a, \forall i\in\Cp,\forall n\in\Zb^+$, which means the wrong decision is made as long as there are $r-c$ mistaken votes because there have been $c$ manipulated votes. It can be denoted as event $\{\tau^-_{(r-c)}(a)\leq \tau^+_{(r-c)}(b)\}$ occur in absence of attack and (\ref{eq:FA_un_H1}) is thus obtained.
\end{proof}

%

	
	
	\bibliographystyle{IEEEtran}
	\bibliography{ref_zishuo}

\end{document}